\newcommand{\blind}{1}
\renewcommand{\hat}{\widehat}
\newtheorem{remark}{Remark}
\begin{document}

\if1\blind
{
\title{Compressed and Penalized Linear Regression}
\author{Darren Homrighausen\thanks{
    Darren Homrighausen is currently Visiting Assistant Professor at
    Southern Methodist University. The authors gratefully acknowledge
    support from the National Science Foundation (grant DMS-1407543 to
  DH; grant DMS-1407439 to DJM) and the Institute for New Economic Thinking
  (grant INO14–00020 to DH and DJM).}\hspace{.2cm}\\ 
    Department of Statistics, Colorado State University\\
  and\\
  Daniel J. McDonald\\
  Department of Statistics, Indiana University}
\maketitle
} \fi

\if0\blind
{
  \bigskip
  \bigskip
  \bigskip
  \begin{center}
    {\LARGE\bf Compressed and Penalized Linear Regression}
\end{center}
  \medskip
} \fi

\bigskip
\begin{abstract}
Modern applications require methods that are computationally
feasible on large datasets but also preserve
statistical efficiency. Frequently, these two concerns are seen as
contradictory: approximation methods that enable computation are
assumed to degrade statistical performance relative to exact
methods. In applied mathematics, where
much of the current theoretical work on approximation resides, the inputs
are considered to be observed exactly.  
The prevailing philosophy is that while the exact problem is, regrettably, unsolvable,
any approximation should be as small as possible. 
However, from a statistical perspective, an approximate or 
regularized solution may be preferable to the exact one. 
Regularization formalizes a trade-off between
fidelity to the data and adherence to prior knowledge about the data-generating
process such as smoothness or sparsity. The resulting estimator
tends to be more useful, interpretable, and suitable as an input to
other methods.

In this paper, we propose new methodology for estimation and prediction 
under a linear model borrowing insights
from the approximation literature.
We explore these procedures from a
statistical perspective and find that in many cases they
improve both computational and statistical performance.

\end{abstract}

\noindent%
{\it Keywords:}  preconditioning; sketching; regularization; ordinary least squares; ridge regression
\vfill




\section{Introduction}

Recent work in large-scale data analysis has focused on 
developing fast and randomized
approximations to important numerical linear algebra tasks 
such as solving least squares
problems \citep{Woodruff2014,RokhlinTygert2008,drineas2011faster} and
finding spectral decompositions \citep{halko2011finding,Gittens:2013aa,HomrighausenMcDonald2016}. 
These approaches, known as {\it compression}, {\it sketching}, or {\it
  preconditioning}, take a given data set and construct a reduced sized,
``compressed,'' solution.  
Theoretical justifications for these approximate approaches 
upper bounds the difference in the objective function 
evaluated at the ``compressed'' solution relative to the full-data solution.

In numerical linear algebra, a major goal is to develop compression algorithms that 
decrease the computational or storage burden 
while not giving up too much accuracy.
Due to their random nature,
theoretical performance bounds are stated with high probability with respect to the subsampling or random projection mechanism
\citep{halko2011finding}.  For example, when confronted with a massive least squares problem, 
we could attempt to form a compressed solution such that its residual sum of squares is not too much larger than
the residual sum of squares of the least squares solution.

In this paper, we take a more statistical perspective: we investigate the performance in terms of
 parameter estimation and prediction risk.  Leveraging insights into the
 statistical behavior of the most commonly used compressions, 
 we develop and explore novel algorithms for compressed least squares.
 We show that our methods
provide satisfactory, or even {\it superior}, statistical performance at a fraction of the
computation and storage costs.
%
%
%
%

\subsection{Overview of the problem}
\label{sec:overview-problem}
 Suppose 
we make $n$ paired, independent observations $X_i \in \R^p$ and  $Y_i\in \R$, $i = 1,\ldots,n$, where $X_i$ is a vector
of measurements, $Y_i$ is the associated response, $n \gg p$, and both $n$ and $p$ are both very large.
Concatenating the vectors $X_i$ row-wise into a matrix $\X \in \R^{n\times p}$ 
and the responses $Y_i$ into a vector $\Y$, we assume that 
there exists a $\b_* \in \R^p$ such that 
\begin{equation}
\Y = \X \b_* + \sigma\varepsilon,
\label{eq:linearModel}
\end{equation} 
with $\E \varepsilon = 0$, 
$\V \varepsilon = I_n$, and $\X$ fixed. 

We can seek to estimate a relationship between
$X$ and $Y$ via linear regression.  That is, for any vector $\b\in \R^p$, we define 
$\ell(\b) := \sum_{i=1}^n (Y_i - X_i^{\top}\b)^2$.
Writing the (squared) Euclidean norm as
$\norm{\X\b-\Y}_2^2 := \sum_{i=1}^n (Y_i - X_i^{\top}\b)^2$, a least squares solution
is  a vector $\hat{\b}\in \R^p$ such that
\begin{equation}
\label{eq:leastSquaresSolution}
\ell(\hat{\b}) = \min_\b \norm{\X\b-\Y}_2^2.
\end{equation}  
This is given by $\hat{\b} = \X^\dagger \Y$, where $\X^\dagger$ is the
Moore-Penrose pseudo inverse of $\X$. If $\X$ has full column rank,
the solution simplifies to 
$\hat{\b} = (\X^{\top}\X)^{-1} \X^{\top}\Y$.

The least squares solution can be computed stably in $O(np^2)$ time using, for example,
routines in LAPACK such as the QR decomposition, Cholesky decomposition of the normal equations, 
or the singular value decomposition \citep{golub2012matrix}.  It also
has a few, well-known statistical 
properties such as being a minimum variance unbiased estimator or the
best linear unbiased estimator. 
However, classic numerical linear algebraic techniques require substantial 
random access to $\X$ and $\Y$, so computing $\hat{\b}$ can be infeasible or undesirable
in practice.

The big-data regime we consider, i.e. $n \gg p$ and both are very large, can happen in many different 
scientific areas such as psychology, where 
cellular phones are used to collect high-frequency data on
individual actions; atmospheric science, where
multiresolution satellite images are used to understand climate change and predict
future weather patterns; technology companies, which use
massive customer databases to predict tastes and preferences; or astronomy,
where hundreds of millions of objects are measured using radio telescopes.

%
%
%
%

\subsection{Prior work}
\label{sec:prior-results}


A very popular approach in the approximation literature
\citep{RokhlinTygert2008,drineas2011faster,Woodruff2014} is to
generalize equation~\eqref{eq:leastSquaresSolution} to
include a compression matrix $Q \in \R^{q \times n}$, with $n \geq q > p$, 
$\ell_Q(\b) =  \norm{Q\X\b-Q\Y}_2^2$.
The associated approximation
to $\hat{\b}$ is the {\it fully compressed estimator}
\begin{equation}
\hat{\b}_{FC} = \argmin_{\beta} \norm{Q\X\b-Q\Y}_2^2.
\label{eq:fullCompression}
\end{equation}  
Now, $\hat{\b}_{FC}$ can be computed via standard techniques by using the
compressed data $Q\X$ and $Q\Y$.
We defer discussion of strategies and trade-offs for specific choices of $Q$
to \autoref{sec:Q}, but, clearly, some structure on $Q$ is required to enable fast
multiplication.

The standard theoretical justification defines a tolerance parameter $\epsilon$ and a compression 
parameter $q = q(\epsilon)$ such that with high probability \citep{drineas2011faster,drineas2012fast},
\begin{equation}
\ell(\hat\beta_{FC})  \leq (1 + \epsilon) \ell(\hat\beta).
\label{eq:genApproxResult}
\end{equation}
In this case, the probability is stated with respect to the process that generates  $Q$ only
and the data are considered fixed.  Thus~\eqref{eq:genApproxResult}
is a worst-case analysis since it must hold uniformly over all data sets, regardless
of the ``true'' data generating process. 

There have been many proposals
for how to choose the compression matrix $Q$ and the compression parameter $q$.  
A classical approach is to define $Q$ such that $Q\X$
corresponds to uniform random sampling of the rows of $\X$. 
The success of this approach, in the sense of obtaining small $\epsilon$ and $q$ in \eqref{eq:genApproxResult},
depends crucially on the coherence of $\X$, defined as the maximum
squared-Euclidean norm of  
a row of an orthogonal matrix that spans the column space of $\X$
\citep{AvronMaymounkov2010,drineas2011faster,RokhlinTygert2008}.
Note that  the coherence is very different from the condition number, which is the ratio
of the largest and smallest singular values of $\X$: the coherence is necessarily
in the interval $[p/n,1]$ while the condition number can be any positive real number.  If $\X$ has
coherence $p/n$, then $q = O(p\log p)$ is sufficient to obtain a high probability bound while
a coherence of $1$ essentially requires all rows to be sampled \citep{AvronMaymounkov2010}.

Therefore, it is important to either reduce the coherence or sample
the rows in proportion to their influence on it \citep[this
  influence is known as a ``leverage score''][]{drineas2006sampling}. 
Unfortunately, it is as expensive to compute the 
coherence or the leverage scores as it is to solve the original least squares problem in \eqref{eq:leastSquaresSolution}.
Hence, if computational complexity is of concern, they are both unavailable.

Instead, a general approach for reducing the coherence of $\X$ is to randomly 
``mix'' its rows. The rows of this new, mixed matrix have smaller
coherence (with high probability) and can then be randomly sampled.
In particular, \citet{drineas2011faster} show that equation \eqref{eq:genApproxResult} holds
as long as $q$ grows like $p\log(n) \log(p\log(n))$; though it is claimed that smaller $q$ still works well in practice.
Readers interested in the practical performance of 
these randomized algorithms should see
Blendenpik~\citep{AvronMaymounkov2010} or LSRN~\citep{MengSaunders2014}.

Relative to the computational properties of these compression methods,
there has been comparatively little work on their statistical properties.
\citet{raskutti2015statistical} analyze various
relative efficiency measures of $\hat\beta_{FC}$ versus $\hat{\beta}$ as a function of the compression
matrix $Q$.  They find that the statistical quality of $\hat\beta_{FC}$ depends on the oblique projection
matrix $U(QU)^{\dagger}U$, where $U$ is the left singular
matrix of $\X$.
Additionally, \citet{MaMahoney2015} develop a
theoretical framework, which we adopt, to compare the statistical
performance of various compression matrices used
in~\eqref{eq:fullCompression}. In particular, they show that
in terms of the mean squared error (MSE) of $\hat\b_{FC}$, neither leverage-based sampling nor
uniform sampling uniformly dominates the other. That is, leverage-based sampling
works better for some datasets while uniform sampling has better MSE for others. Despite using their framework, we take a different
perspective: that by combining compression with regularization, we can
improve over the uncompressed solution and produce an estimator with both better computational and
statistical properties.


\subsection{Our contribution}


This paper, in contrast with previous research,
adopts the perspective that approximations mimicking the
least squares estimator $\hat\b$ may produce faster methods, but may
not result in good estimators.
Instead, we seek approximations that minimize
estimation and/or prediction error. 
It is well known that the least squares estimator $\hat\b$, while
being a minimum variance, unbiased estimator, performs 
poorly in terms of prediction or estimation risk relative to
regularized estimators.  As approximation and regularization are 
very similar, this insight suggests the intriguing possibility that it
is possible to develop a compressed estimator 
that performs better statistically, and is cheaper to compute and
store, than $\hat\b$. 

We show that $\hat\beta_{FC}$, like $\hat\beta$, is unbiased and hence
must have a larger estimation and  
prediction error.  As a remedy, we define a {\it partial compression}
estimator, notated  $\hat\beta_{PC}$ 
and defined in equation \eqref{eq:wPC}.  In contrast to
$\hat\beta_{FC}$, $\hat\beta_{PC}$  
is a biased estimator.  In fact, its bias is such that it performs relatively poorly in practice.  
Therefore, we propose a {\it linear combination} of
$\hat\beta_{FC}$ and $\hat\beta_{PC}$.  We find that this
combined estimator performs  
much better than either  individually.


Furthermore, as regularized regression outperforms $\hat\beta$, it is sensible to 
extend $\hat\beta_{FC}$, $\hat\beta_{PC}$, and our linear combination to
analogous regularized versions. 
This introduces a tuning parameter with which to directly calibrate bias and variance.
In this paper, we use  {\it ridge regression} as a regularized least squares method,
though other methods such as bridge, lasso, or the nonnegative garrote are alternatives.
The ridge regression estimator works
by inflating the smallest singular values of $\X$ thereby stabilizing
the least squares problem. 
However, solving for the ridge regression estimator has the same
computational complexity as solving equation
\eqref{eq:linearModel}. 
Therefore, we apply the compression paradigm
to ridge regression.  We find that in many cases, these 
estimators can be computed for a fraction of the cost
while providing better performance than the original least-squares estimator.


In  \autoref{sec:compr-regr} we carefully define the our proposed
estimators. Because these estimators contain a
tuning parameter, \autoref{sec:tuning-param-select} gives a
data-driven procedure for selecting it with minimal extra computation.
\autoref{sec:simulations} examines the performance of these
compressed estimators via simulation. In particular,
we demonstrate that in most cases our methods have better performance
than the least-squares solution and occasionally better performance
than the uncompressed ridge regression estimator. Likewise, \autoref{sec:data-example}
looks at the effectiveness of our estimators on two real-data examples. In
\autoref{sec:theory-results}, we give theoretical expressions for the
bias and variance of our estimators and compare these to the standard
results for ridge regression. We show that, to a first
order approximation, the compressed estimators require a different
tuning parameter than ridge regression to minimize bias and
variance. Lastly, \autoref{sec:conclusion} summarizes our conclusions
and presents avenues for further research.

\section{Compressed regression}
\label{sec:compr-regr}

In this section, we describe the standard set up of compressed least
squares regression before introducing our modifications and the
specific form of the compression matrix $Q$ we consider.

\subsection{Compressed least squares regression}
\label{sec:compr-least-squar}
The fully compressed least squares estimator, defined in \eqref{eq:fullCompression}, can be
written as
\begin{equation}
\min_\b \norm{Q(\X\b - Y)}_2^2 = \min_\b \left( \b^{\top}\X^{\top}Q^{\top}Q\X\b - 2\b^{\top}\X^{\top}Q^{\top}Q\Y\right).
\label{eq:wFC}
\end{equation}
An alternative approach is {\it partial compression} \citep{BeckerKawas2015}
\begin{equation}
\min_\b \left( \b^{\top}\X^{\top}Q^{\top}Q\X\b - 2\b^{\top}\X^{\top} \Y\right),
\label{eq:wPC}
\end{equation}
which removes the compression matrix from the cross-product term.
Depending on the particular draw and form of the compression matrix $Q$, there may not be unique solutions
to equations \eqref{eq:wFC} or \eqref{eq:wPC}.  

\subsection{Compressed ridge regression}

A well used technique to stabilize the least squares problem is known as Tikhonov regularization or ridge regression
in applied mathematics~\citep{TikhonovArsenin1979} and
statistics~\citep{hoerl1970ridge}, respectively.  The ridge regression
problem can  
be written in the Lagrangian form as
\begin{equation}
\hat{\b}(\lambda) := \argmin_\b \norm{\X\b - \Y}_2^2  + \lambda \norm{\b}_2^2.
\label{eq:ridge}
\end{equation}
While $\hat{\b}(\lambda)$ has better numerical
properties than $\hat{\b}(0) \equiv \hat{\b}$ because it inflates the small singular values of $\X$, it
improves neither the computational complexity, which is still
$O(np^2)$, nor the storage, which is $O(np)$.

In addition to better numerical stability, the ridge solution has lower
MSE than $\hat{\b}$ for some $\lambda$.
These results beg the question: if $\hat{\b}(\lambda)$ is a better
overall procedure, why not compress it instead?  Leveraging
this insight, we define the {\it fully compressed ridge estimator}, in
analogue to equation \eqref{eq:wFC}, as
\begin{equation}
\min_\b \norm{Q(\X\b - \Y)}_2^2 + \lambda\norm{\b}_2^2= \min_\b \left(
  \b^{\top}\X^{\top}Q^{\top}Q\X\b -
  2\b^{\top}\X^{\top}Q^{\top}Q\Y\right) + \lambda \b^{\top}\b. 
\label{eq:wFCridge}
\end{equation}
The minimizer to equation \eqref{eq:wFCridge} can be written
\begin{equation}
\hat{\b}_{FC}(\lambda) = (\X^{\top}Q^{\top}Q\X + \lambda
I)^{-1}\X^{\top}Q^{\top}Q\Y. 
\label{eq:wFCridgeEst}
\end{equation}
Likewise, analogous to equation \eqref{eq:wPC}, the {\it partially compressed ridge estimator} solves
\begin{equation}
\min_\b \left( \b^{\top}\X^{\top}Q^{\top}Q\X\b - 2\b^{\top}\X^{\top} \Y\right)  + \lambda \b^{\top}\b
\label{eq:wPCridge}
\end{equation}
with minimizer
\begin{equation}
\hat{\b}_{PC}(\lambda) = (\X^{\top}Q^{\top}Q\X + \lambda I)^{-1}\X^{\top}\Y.
\label{eq:wPCridgeEst}
\end{equation}
Ignoring numerical issues for very small $\lambda$, both of these
estimators always have a unique solution regardless of $Q$ and $\X$.

\subsection{Linear combination compressed ridge regression}
Instead of choosing between $\hat{\b}_{FC}(\lambda)$ and $\hat{\b}_{PC}(\lambda)$, it is reasonable
to use a model averaged estimator formed by combining them.  Consider the estimator generated by a convex combination
\begin{equation}
\hat{\b}_\alpha(\lambda) = \alpha \hat{\b}_{FC}(\lambda) + (1-\alpha) \hat{\b}_{PC}(\lambda),
\label{eq:mostGeneral}
\end{equation}
where $\alpha \in [0,1]$.  
A data-driven value $\hat\alpha$ can be computed
by
forming the matrix 
\[
B(\lambda) =
\begin{bmatrix}
\hat{\beta}_{FC}(\lambda),\, \hat{\beta}_{PC}(\lambda)
\end{bmatrix}
\in
\R^{p\times 2},
\]
a column-wise concatenation of $\hat{\b}_{FC}(\lambda)$ and
$\hat{\b}_{PC}(\lambda)$, and then solving
\begin{equation}
\label{eq:convexCombinationEstimator}
\hat\alpha = \argmin_{\alpha \in [0,1]} \norm{\X B(\lambda)
\begin{bmatrix}
\alpha \\
(1-\alpha)
\end{bmatrix} - Y}_2^2.
\end{equation}
There is no reason that the convex constraint will provide the best
estimator.  Hence, we also consider the
unconstrained two-dimensional least squares problem given by
\begin{equation}
\hat{\alpha}= \argmin_{\alpha  \in \R^2} \norm{\X B(\lambda) \alpha - Y}_2^2.
\label{eq:alphaHat}
\end{equation}
We emphasize that either version can be computed in $O(n)$ time. Lastly, an estimator of $\beta$, or a prediction $\hat{Y}$, can be
produced with
\begin{equation}
\hat{\beta}_{\hat\alpha}(\lambda) := B(\lambda) \hat{\alpha}
\label{eq:linearCombinationEstimator}
\end{equation} 
and 
\begin{equation}
\hat{Y}_{\hat\alpha}(\lambda) := \X \hat{\beta}_{\hat\alpha}(\lambda),
\label{eq:linearCombinationPredictions}
\end{equation} 
respectively.


\subsection{Compression matrices}
\label{sec:Q}
The effectiveness of these approaches depends on $q$, the nature of $Q$, 
and the structure of $\X$ and $\bstar$.  For arbitrary $Q$, the multiplication $Q\X$ would take $O(qnp)$
operations and, hence, could be expensive relative to solving the
original least squares problem. However, this multiplication is
``embarrassingly parallel'' (say, by the map-reduce framework)
rendering the multiplication cost
somewhat meaningless in contrast to
the least squares solution, which is not easily parallelized. 
Therefore, the limiting computation for solving
\eqref{eq:wFCridge} is only $O(qp^2)$.

The structure of $Q$ is chosen, typically, either for its theoretical
or computational properties.  Examples are standard Gaussian entries, producing
dense but theoretically convenient $Q$, fast Johnson-Lindenstrauss  methods, or
the counting sketch.  A thorough discussion of these methods is outside the scope of this paper.  Instead,
we use a ``sparse Bernoulli'' matrix~\citep{KaneNelson2014,DasguptaKumar2010,Woodruff2014,Achlioptas2003}.  Here, the entries
of $Q$ are generated independently where $\P(Q_{ij} = 0) = 1 - 1/s$
and $\P(Q_{ij} = -1) = \P(Q_{ij} = 1) = 1/(2s)$ for some $s\geq 1$.  Then, $\E Q$  has $qn/s$
non-zero entries and can be multiplied quickly with high
probability, while equation \eqref{eq:wFC} can be solved without parallelization in $O(qnp/s +
qp^2)$ time on average. Throughout this paper we assume $Q$ is renormalized so that $\E
Q^{\top}Q = I_{n}$.  

\section{Tuning Parameter Selection}
\label{sec:tuning-param-select}

Our methods require appropriate selection of $\lambda$ to
achieve good performance.
Presumably, if computations or storage are at a premium, computer-intensive resampling methods such as cross-validation are 
unavailable.  Therefore, we develop methods that rely on a 
corrected training-error estimate of the risk.  These corrections depend crucially on the degrees of freedom.  
Specifically, the {\it degrees of freedom}~\citep{efron1986biased} of a procedure $g:\mathbb{R}^n \rightarrow \mathbb{R}^n$ 
that produces predictions $g(Y) = \hat{Y}$ is
\[
\textrm{df}(g) := \frac{1}{\sigma^2} \sum_{i=1}^n \textrm{Cov}(g_i(Y),\Y_i),
\]
where $\sigma^2 = \mathbb{V}(\Y_i)$.

If the response vector is distributed according to the 
homoskedastic model $Y \sim (\mu, \sigma^2 I_n)$,
then we can decompose the prediction risk of the procedure $g$ as
\[
\textrm{Risk}(g) =\E \norm{g(Y) - \mu}_2^2 = \E \norm{g(Y) - Y}_2^2 -n\sigma^2 + 2 \sigma^2 \textrm{df}(g).
\]
A plug-in estimate of $\textrm{Risk}(g)$
(analogous to $C_p$, \citealp{Mallows1973}) is then
\begin{equation}
\widehat{\textrm{Risk}}(g) = \norm{g(Y) - Y}_2^2 -n\hat\sigma^2 + 2 \hat\sigma^2 \hat{\textrm{df}}(g),
\label{eq:riskEstimate}
\end{equation}
where $\hat{\textrm{df}}(g)$ and  $\hat\sigma^2$ are estimates of $\textrm{df}(g)$ and  $\sigma^2$, respectively.
We discuss strategies for forming $\hat{\textrm{df}}(g)$ in \autoref{sec:dof}.
As for the variance, ordinarily one would use the unbiased estimator
$\hat\sigma^2 = (n-\textrm{df}(g))^{-1}\norm{ (I_n - \Pi_\X )Y}_2^2$, 
where $\Pi_\X  = \X(\X^{\top}\X)^{\dagger}\X^{\top} = UU^{\top}$ is the orthogonal projection onto the column space of $\X$.
However, computing $I_n - \Pi_\X$ is just as expensive as computing the least squares solution $\Pi_\X Y$ itself.
Therefore, to avoid estimating $\sigma^2$, we use generalized cross validation~\citep[][]{GolubHeath1979}
\begin{equation}
\textrm{GCV}(g) = \frac{\norm{g(Y) - Y}_2^2 }{(1 - \textrm{df}(g)/n)^2}.
\end{equation}  
Crucially, GCV does not require a variance estimator, a major advantage.  

\subsection{Estimating the degrees of freedom}
\label{sec:dof}

For any procedure $g$ which is linear in $Y$, that is, there exists
some matrix $\Phi$ which does not depend on $Y$ such that $g(Y) = \Phi
Y$, then 
$\textrm{df}(g) = \tr(\Phi)$, the trace of $\Phi$.
Therefore, computing the exact degrees of freedom for $\hat\beta_{\alpha}(\lambda)$ (that is, the linear combination
estimator with a fixed $\alpha$) is straightforward. In this case,
\[
\X \hat{\beta}_{\alpha}(\lambda) 
= 
\X B(\lambda)\alpha
=
\alpha_1\X \hat{\beta}_{PC}(\lambda) + \alpha_2\X \hat{\beta}_{FC}(\lambda)
=
\Phi_1 Y + \Phi_2 Y=\Phi Y
\]
where $\Phi_1 = \alpha_1\X(\X^{\top}Q^{\top}Q\X + \lambda I)^{-1}\X^{\top}$
and $\Phi_2 = \alpha_2\X(\X^{\top}Q^{\top}Q\X + \lambda
I)^{-1}\X^{\top}Q^\top Q$. So
the degrees of freedom of $\hat\beta_{\alpha}(\lambda)$ is
\begin{equation}
\textrm{df} = 
\alpha_1 \, \tr\left(\X (\X^{\top}Q^{\top}Q\X + \lambda I)^{-1}\X^{\top}Q^{\top}Q\right)
 + 
\alpha_2 \, \tr\left(\X (\X^{\top}Q^{\top}Q\X + \lambda I)^{-1}\X^{\top}\right).
\label{eq:dfForGeneralLinearCombination}
\end{equation}
In particular, both the fully and partially compressed estimators have
simple forms for the degrees of freedom which do not need to be estimated.

For the linear combination estimator when $\alpha$ is estimated, as in
equations \eqref{eq:linearCombinationEstimator} or
\eqref{eq:linearCombinationPredictions}, 
computing the degrees of freedom is more complicated.  This
estimator is nonlinear because both the matrix $B(\lambda)$ and 
the weight vector $\hat\alpha$ are functions of $Y$.
A straighforward estimator of the degrees of freedom for
$\hat\beta_{\hat\alpha}(\lambda)$ is created by plugging $\hat\alpha$
into equation \eqref{eq:dfForGeneralLinearCombination}:
\begin{equation}
\hat{\textrm{df}} =
 \hat{\alpha}_1 \, \tr\left(\X (\X^{\top}Q^{\top}Q\X + \lambda I)^{-1}\X^{\top}Q^{\top}Q\right)
 + 
\hat{\alpha}_2 \, \tr\left(\X (\X^{\top}Q^{\top}Q\X + \lambda I)^{-1}\X^{\top}\right).
\end{equation}
Though this approximation intuitively underestimates the degrees of freedom, the general idea is used for other
nonlinear estimators such as neural networks \citep{Ingrassia2007}.

Alternatively, the degrees of freedom can be computed via Stein's
lemma \citep{stein1981estimation} if we are willing to assume that the
response vector is multivariate normal: $Y \sim N(\mu,\Sigma)$.  Then,
if $g(Y)$ is continuous and almost differentiable in $Y$,
$\textrm{df}(g) = \E [(\nabla \cdot g) (Y)]$,
where $(\nabla \cdot g) (Y) = \sum_{i=1}^n \partial g_i/\partial Y_i$ is the {\it divergence} of $g$.  It immediately follows that
$\widehat{\textrm{df}}(g) = (\nabla \cdot g)(Y)$ is an unbiased
estimator of $\textrm{df}(g)$.  Though the calculus is tedious, the
divergence of the linear combination estimator
can be calculated by repeated applications of the chain rule. 
Unfortunately, we do not yet know how to compute the divergence
without forming $Q^{\top}Q$, which is a large, dense matrix, nor the implications of
the required normality assumption.  Hence,
we consider this a possibly fruitful direction for future research.

\subsection{Computing the path}
\label{sec:computing-path}

In order to select tuning parameters, we need to compute the
estimators quickly for a range of possible $\lambda$. Luckily, this
can be implemented in the same way as with ridge regression. That is we
examine
$
(\X^\top Q^\top Q \X + \lambda I)^{-1} = R(L^2 + \lambda I)^{-1}R^\top,
$
where the singular value decomposition is written $Q\X=SLR^\top$. Therefore, we can take the SVD of $Q\X$ once and
then compute the entire path of solutions for a sequence of
$\lambda$ while only increasing the computational complexity
multiplicatively in the number of $\lambda$ values considered.

\section{Simulations}
\label{sec:simulations}

In this section, we construct 
simulations
to explore when $\hat{\beta}_{\alpha}(\lambda)$ performs
well.

\subsection{Setup}
\label{sec:setup}

To create data, we generate
the design matrix $\X \in \R^{n\times p}$ by independently sampling the
rows from a multivariate normal distribution with mean zero and
covariance matrix $\Sigma$ which has unit variance on the diagonal and
covariance (correlation) $\rho\in\{0.2,\ 0.8\}$ off the diagonal. We then form
$Y = \X\bstar+ \epsilon$,
where $\epsilon_i$ are i.i.d.\ Gaussian with
mean zero and variance $\sigma^2$. In all cases, we take $n=5000$ and
let $p=50,\ 100,\ 250$, or $500$.
%
%

While the performance of our methods depends on all of these
design conditions, we have found that the most important factor is the
structure of $\bstar$. For this reason, we examine three related
structures intended to illustrate the interaction between $\bstar$ and compression. 

In the first case, we take $\b_*\sim\mbox{N}(0,\tau^2 I_p)$. Under
this model, ridge regression is Bayes-optimal for
$\lambda_*=n^{-1}\sigma^2/\tau^2$. We set
$\tau^2=\pi/2$ so that $p^{-1}\E[\norm{\b_*}_1]=1$. Finally, to ensure that
$\lambda_*$ is not too small, we take $\sigma=50$ implying
$\lambda_*\approx 0.32$. The second structure for $\bstar$ is created
to make ridge regression perform poorly: we simply set 
$(\bstar)_j \equiv 1$ for all $1\leq j\leq p$. This scenario is easier for our methods, and
the simulations demonstrate that they outperform both ridge
regression and ordinary least squares (OLS). 
Finally, we choose a middle ground: we take 
$
(\beta_*)_j = (-1)^{j-1}.
$
As in the first case, the coefficients cluster around zero, but here
there is no randomness. In the second and third scenarios, we also use
$\sigma=50$. 

\begin{remark}
The first scenario allows for an easy comparison between our methods
and the optimal algorithm, ridge regression. However, determining
appropriate values for $\tau^2$ and $\sigma^2$ requires some
care. Under this model, 
if $\tau^2$ is large, then $\lambda_*$ will be 
very small so that OLS and ridge regression are
nearly equivalent and our methods will perform relatively poorly. The reason
(see \autoref{sec:theory-results})
is that compression tends to increase baseline variance relative
to ridge regression and therefore requires \emph{more} regularization
and a greater increase in bias. 
On the other 
hand, if $\tau^2$ is small, then $\lambda_*\rightarrow\infty$ and
we will end up shrinking all coefficient estimates to zero all the
time. Compression will exacerbate this effect and will look
artificially good: we should just use the zero estimator. 

To solve this Goldilocks problem and to get nontrivial results in these
simulations, we need to set
$\sigma$ and $\tau$ just right. Note that these
considerations wouldn't be an issue in the high-dimensional,
$p>n$, case since the least squares solution is unavailable. 
Our choice, with
$\lambda_*\approx 0.32$, is intended to be somewhat neutral toward
our methods.
\end{remark}

We examine four different compressed estimators with 
penalization: (1) full compression, (2) partial compression, (3)
a linear combination of the first two, and (4) a convex
combination of the first two. We also use the OLS estimator and the ridge regression estimator. For ridge
regression, we use $\lambda_*$ in the first scenario (and label it
``Bayes'') and choose $\lambda$ by minimizing GCV in the other cases.
We generate 50 training data sets as above with $n=5000$ in all
simulation. 
For the compressed estimators, we examine
three possible values of $q\in\{500,\ 1000,\ 1500\}$. In each case, we
generate $\pre\in\{-1,0,1\}^{q\times n}$ as a ``sparse Bernoulli''
matrix with $s=3$.

\subsection{Estimation error simulations}
\label{sec:estimation-error}

For all four compressed methods, there exist $\lambda$ values which allow the
compressed method to beat ordinary least squares. Some occasionally beat ridge
regression as well depending on the structure of $\beta_*$.
Regularized compression always outperforms unregularized compression
for most $\lambda>0$.
While we have simulated all combinations of $p$ and $\rho$, we only
display results for $p=100$ and $\rho=0.2$ which are
typical. Figures~\ref{fig:estimbox1} to~\ref{fig:estimbox3} show 
boxplots of the results for 
each estimation method across replications. \autoref{fig:estimbox1}
shows the case where $\bstar$ is drawn from a Gaussian
distribution while \autoref{fig:estimbox2} shows $\bstar \in \{-1,1\}^p$
and \autoref{fig:estimbox3} shows $(\bstar)_j \equiv 1,\ \forall j$. Within each
figure, the three panels display different choices of
compression parameter $q$. The $x$-axis shows values
$\lambda$ with the far-right section giving results for OLS and ridge regression. Finally, the $y$-axis is the 
logarithm of the mean squared
estimation error. 
\begin{figure}
\centering
\includegraphics[height=.9\textheight]{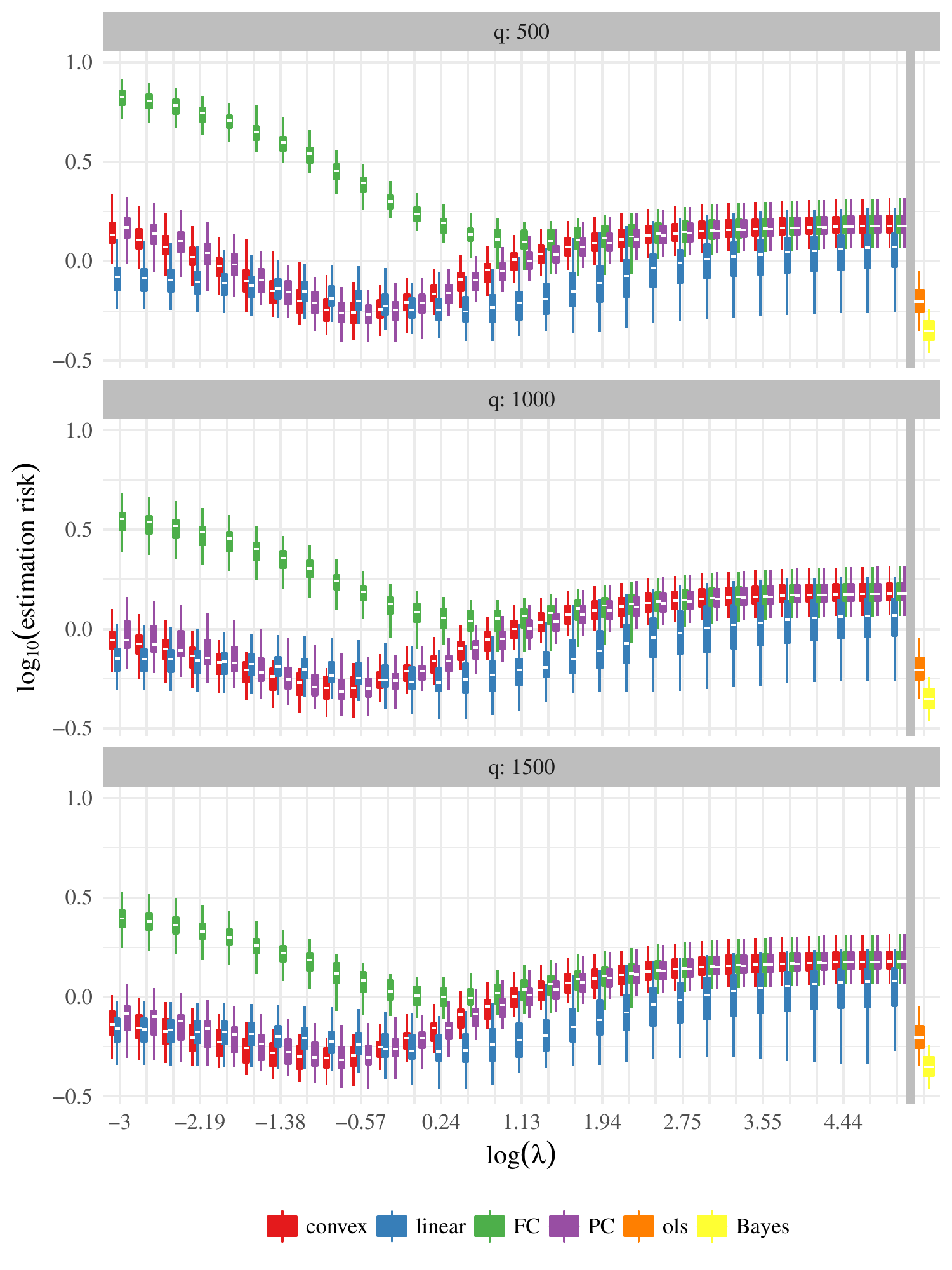}
\caption{Boxplots displaying variance across replications and
  $\lambda$. Here $\bstar$ is Gaussian, $p=100$, and $\rho=0.2$.}
\label{fig:estimbox1}
\end{figure}
\begin{figure}
\centering
\includegraphics[height=.9\textheight]{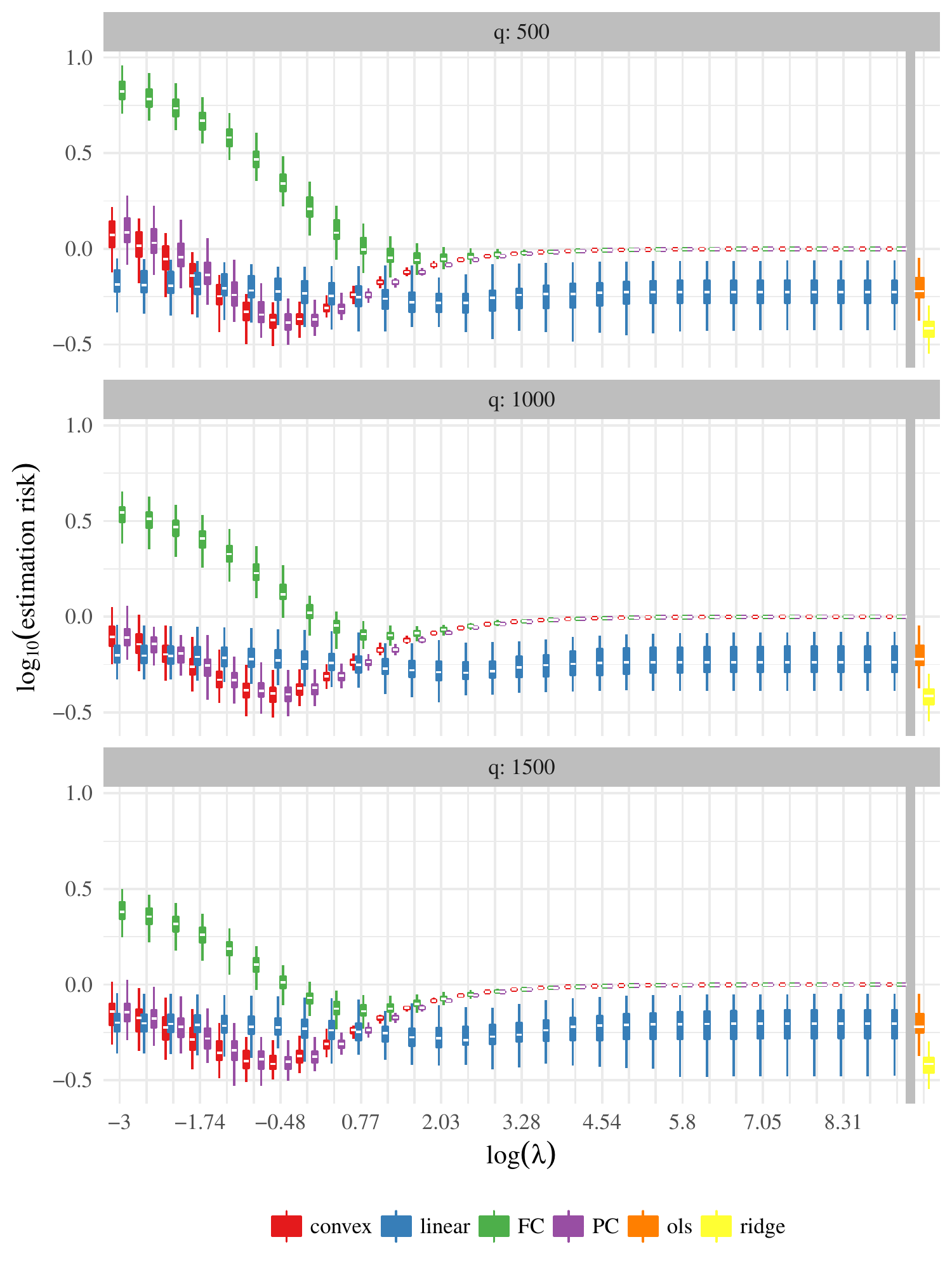}
\caption{Boxplots displaying variance across replications and
  $\lambda$. Here $\bstar\in\{-1,1\}^p$, $p=100$, and $\rho=0.2$.}
\label{fig:estimbox2}
\end{figure}
\begin{figure}
\centering
\includegraphics[height=.9\textheight]{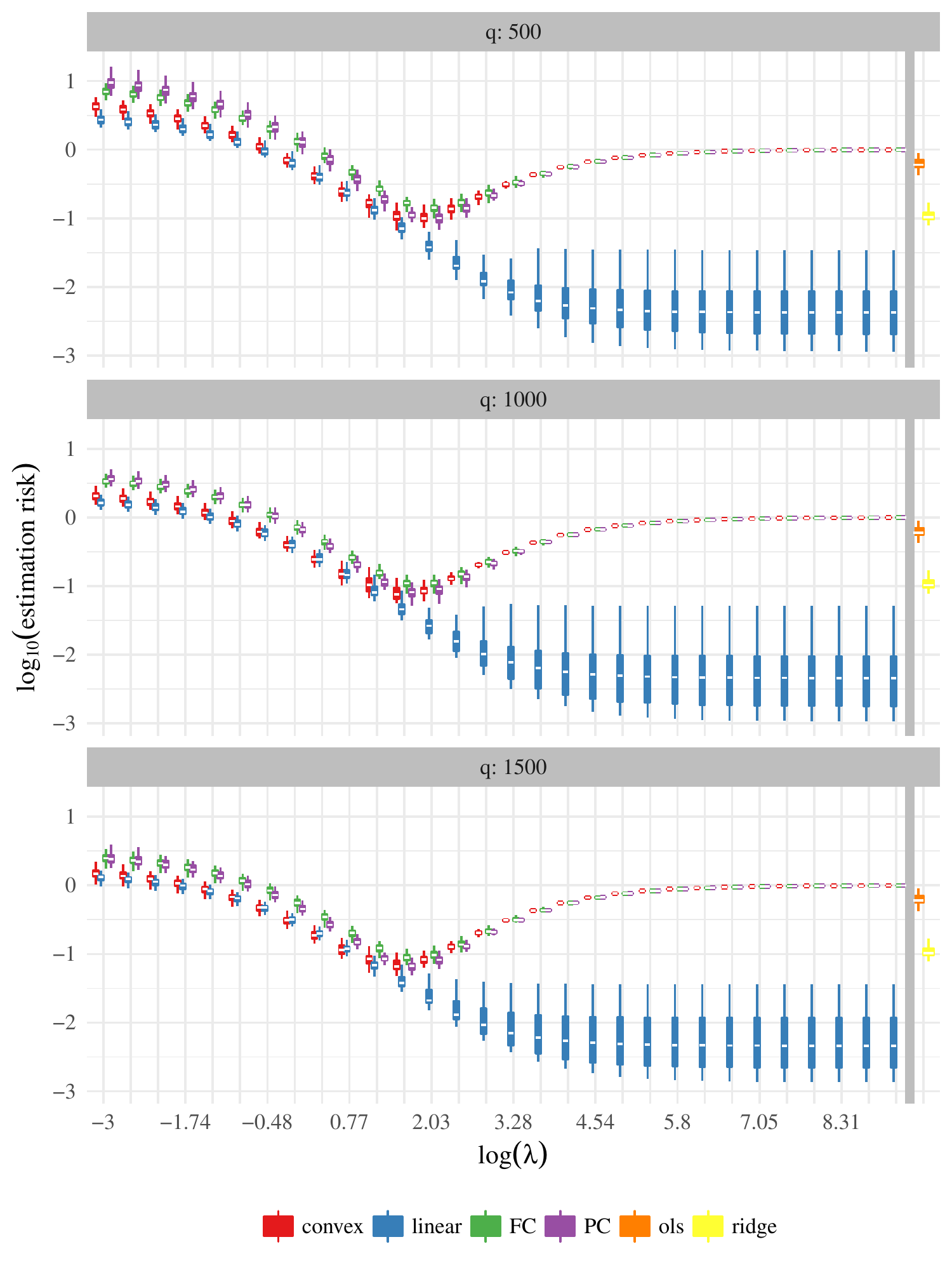}
\caption{Boxplots displaying variance across replications and
  $\lambda$. Here $\bstar\equiv \mathbf{1}$, $p=100$, and $\rho=0.2$.}
\label{fig:estimbox3}
\end{figure}
\begin{figure}
\centering
\includegraphics[height=.9\textheight]{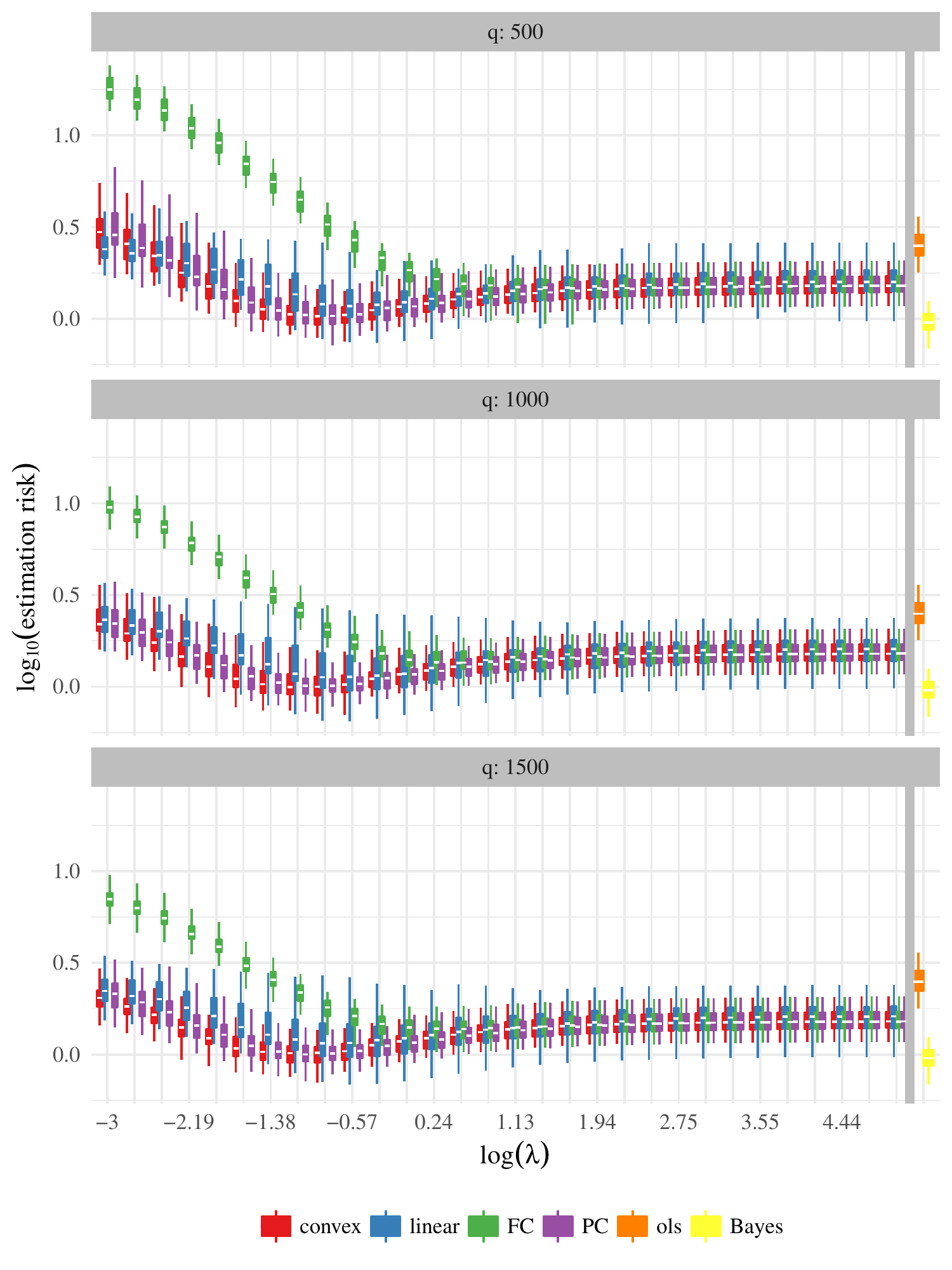}
\caption{Boxplots displaying variance across replications and
  $\lambda$. Here $\bstar$ is Gaussian, $p=100$, and $\rho=0.8$ for
  comparison with \autoref{fig:estimbox1}.}
\label{fig:estimbox4}
\end{figure}

Combining partial and full compression strictly dominates the 
other compressed methods in terms of estimation risk. Furthermore, for
every choice of $q$ and $\beta$, there is a $\lambda$ such that
the linear combination has better estimation risk than OLS. 
When $p$ is small relative to $n$ and the design has low
correlation, only the linear combination and the convex combination
outperform OLS. To compare with the case that the
design has larger
correlation, we also show $p=100$ and $\rho=0.8$ with $(\bstar)_j\equiv 1$
(\autoref{fig:estimbox4}). In this case, all of the compressed methods
outperform OLS for most choices of $\lambda$.

This story holds for
other values of $p$ and $\rho$: (1) regularized compression beats unregularized
compression; (2) when the correlation of the design matrix is
small, there is some level of regularization such that compression
beats OLS; (3) when the correlation of the design
matrix is large, compression beats OLS at
nearly all levels of regularization; (4) the regularized combination
estimators (linear and convex)
are almost always the best; (5) these methods
approach the accuracy of the optimal, full-data Bayes estimator in many cases. 

\begin{remark}
In \autoref{fig:estimbox3}, the performance of the linear
combination estimator is
nearly independent of the choice of $\lambda$, and does
significantly better than the other methods. This behavior is either a
feature or a curse of the method depending on the user's
perspective. Because both full compression and 
partial compression (as well as ridge regression) shrink
coefficients to zero, when $\lambda$ is large, the unconstrained
linear combination will try to use the data to combine two vectors,
both of which are nearly zero, in some optimal way. As long as $\bstar$
is approximately constant, and $\hat{\alpha}$ is unconstrained, this
method can compensate by choosing $\hat{\alpha}$. As
$\lambda\rightarrow\infty$, the solution of
\eqref{eq:linearCombinationEstimator} will adjust and so $\hat\b$ will
actually converge to $(\bstar)_j\equiv 1$ rather than 0 (as the ridge
estimator does). So if the data
analyst believes that $\bstar$ is approximately constant, but nonzero,
this method will work well, even without prior information for the
particular constant. Otherwise, it is safer to use the convex
combination which avoids this pathology. The lack of constraint for
$\hat\alpha$ also has implications
for our approximate degress-of-freedom estimator in
\eqref{eq:dfForGeneralLinearCombination} which we discuss in greater
detail in the next section.
\end{remark}

\subsection{Selecting tuning parameters}
\label{sec:select-tuning-param}

The previous simulation shows that regularized compression can outperform
full-data OLS and compare
favorably with the performance of the Bayes estimator as long as we can choose $\lambda$
well. In this section,
we focus on the case where $\bstar$ has a Gaussian distribution to investigate the empirical
tuning parameter strategies.


We again generate a
training set with $n=5000$, but we also generate an independent test
set with $n=5000$. Then we use the training set to choose
$\lambda_{GCV}$ by generalized cross validation as described in
\autoref{sec:dof}. We also define an optimal (though
unavailable) $\lambda_{test}$ by minimizing the test set prediction error:
\[
\lambda_{test} = \argmin_{\lambda} \frac{1}{n}\norm{Y_{test}-\X_{test} \hat{\beta}(\lambda)}_2^2.
\]

\begin{figure}[t!]
  \centering
  \includegraphics[scale=.8]{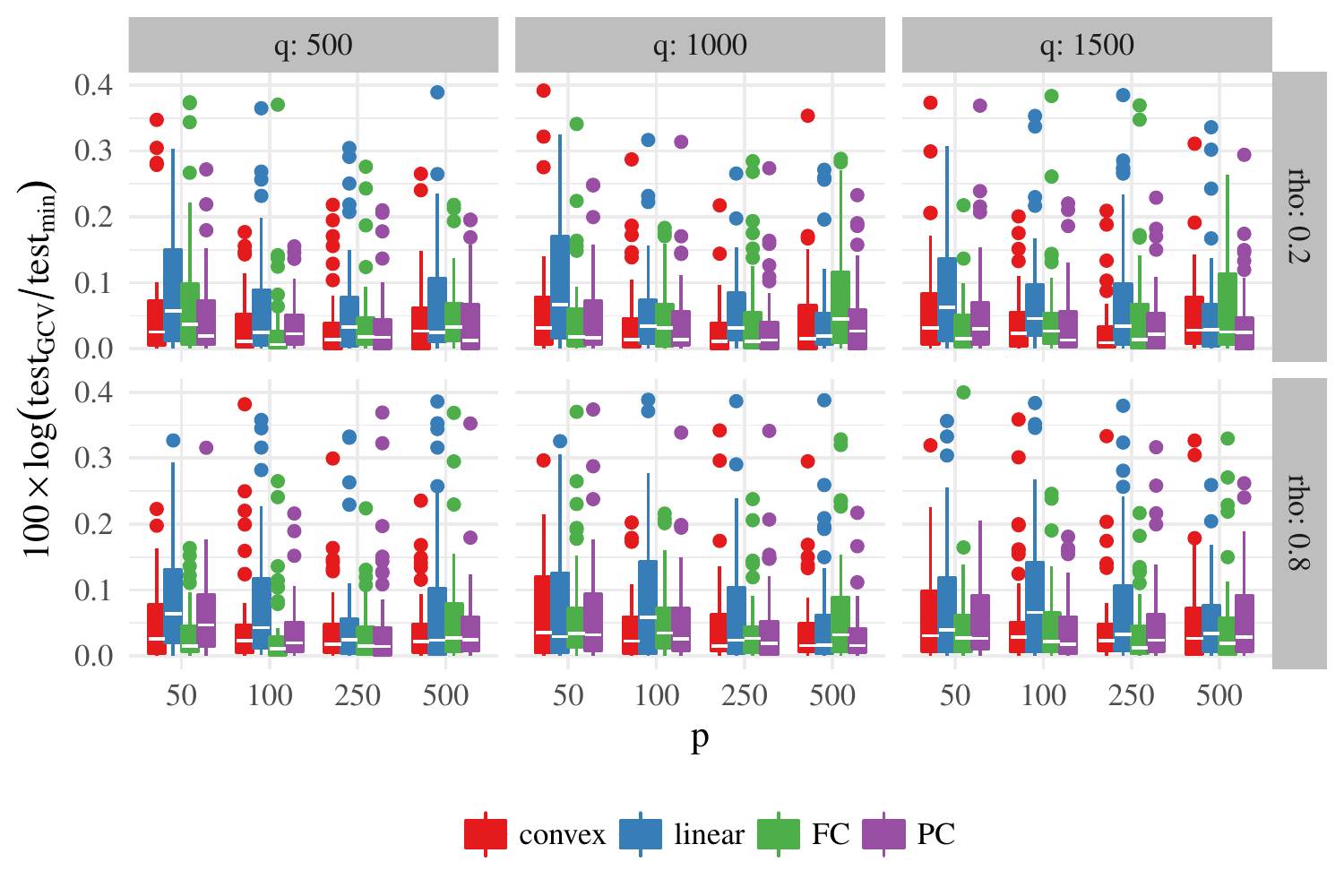}
  \caption{Percentage increase in test error between the tuning parameter chosen by GCV and
    the best tuning parameter we would have chosen with access to the
    same size test set (necessarily greater than 0).}
  \label{fig:tuningparameters2}
\end{figure}


\autoref{fig:tuningparameters2} shows the prediction
risk of the GCV-selected estimates relative to the oracle. That is, we
plot the ratio
\[
\frac{\textrm{test}_{GCV}}{\textrm{test}_{\min}} =
\frac{\norm{Y_{test}-\X_{test} \hat{\beta}(\lambda_{GCV})}_2^2} {\norm{Y_{test}-\X_{test}
    \hat{\beta}(\lambda_{test})}_2^2}  \geq 1.
\]
All methods are
within 1\% of the best test error the majority of the time, but the
convex combination and partial compression tend do the best. Note that
the minimum test error is for the particular method rather than
relative to the best possible test error across all methods. Thus,
while GCV selects the optimal tuning parameter for partial compression
more accurately than for the linear combination, the linear
combination has lower estimation error at its own GCV-selected tuning
parameter.

While \autoref{sec:tuning-param-select} presented two methods for
estimating the degrees of freedom---a simple approximation  as in
equation \eqref{eq:dfForGeneralLinearCombination} or the
divergence---we only present the results for the approximation. As
discussed in \autoref{sec:tuning-param-select}, using the
divergence-based, degrees-of-freedom estimator requires generating an
$n\times n$ matrix, which is computationally infeasible.

\subsection{Overall performance assessments}
\label{sec:over-perf-assessm}

Using the tuning parameter selected by GCV, compressed regression can
perform better than uncompressed regression in some
situations. \autoref{fig:winsestim} examines the performance across
all simulations. Specifically, for each of the 50 training data sets,
we estimate the linear model with each method, choosing $\lambda$ by
GCV if appropriate. We select the method with the lowest estimation
error. We plot the proportion of times each method ``wins''
across all simulation conditions. Generating $\bstar$ from a normal
distribution favors ridge regression, as is to be expected, since
this is the optimal full-data estimator. On the opposite extreme, when
$(\bstar)_j\equiv 1$, the unconstrained linear combination of compressed
estimators is nearly always best. In the middle, each of the
methods works some of the time. 
Overall, the convex combination estimator, that is the estimator given
by \eqref{eq:convexCombinationEstimator}, works well in most cases and has smaller variance
than the unconstrained linear combination. Ordinary least squares
almost never wins.

\begin{figure}[t!]
  \centering
  \includegraphics[scale=.8]{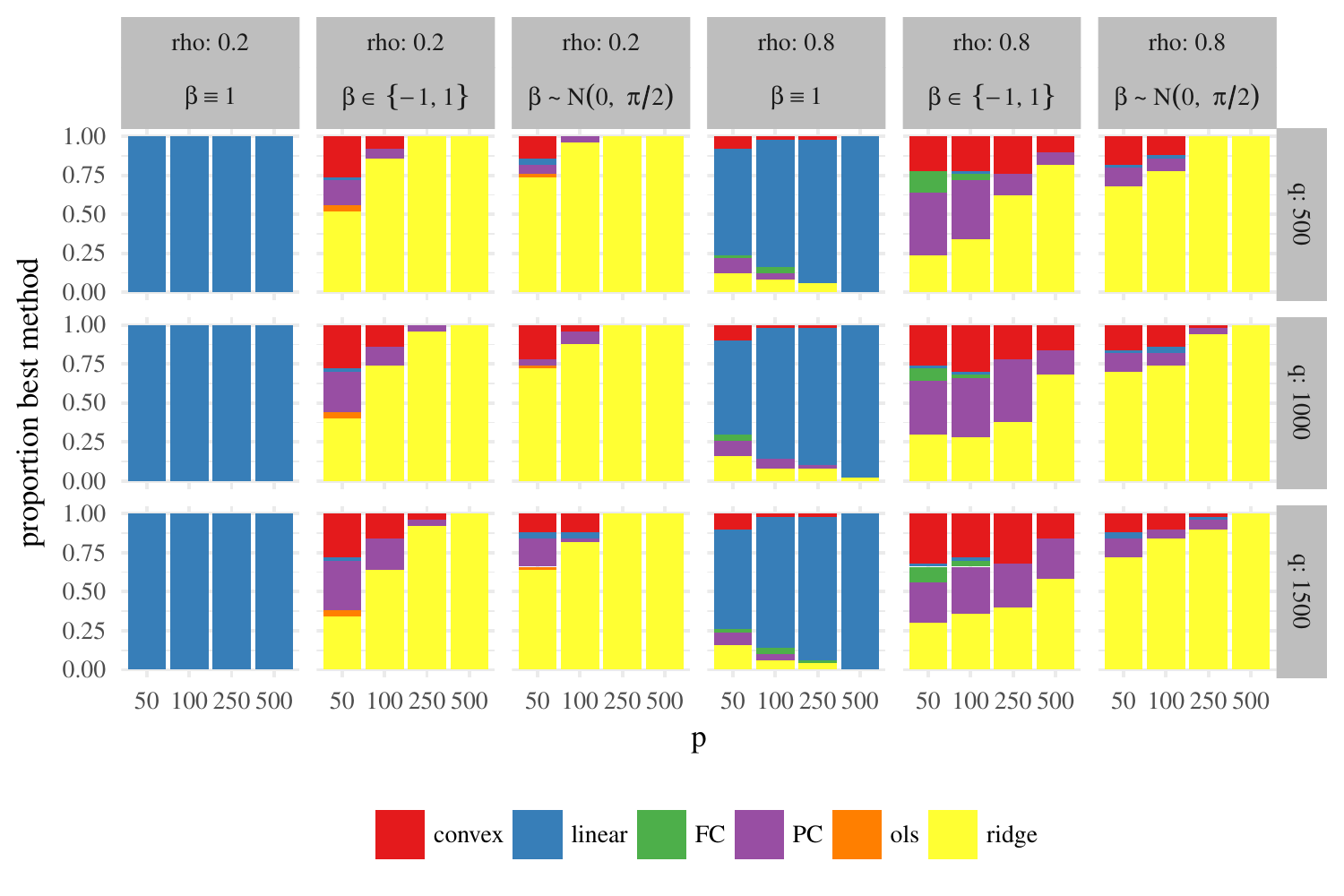}
  \caption{Proportion of simulations a method has the best estimation
    error for each simulation condition.}
  \label{fig:winsestim}
\end{figure}

\subsection{Recommendations}
\label{sec:recommendations}

Even though ridge regression is optimal when $\bstar$ has a Gaussian distribution,
regularized compression can achieve nearly the same prediction error
while using fewer computations and less storage space. Furthermore,
the compressed estimators nearly always outperform ordinary least
squares. \autoref{fig:bestlamtest} displays the prediction risk for
each method when the tuning parameter is chosen by GCV. The difference
between the optimal model and the compressed
approximations is negligible.

\begin{figure}[t!]
  \centering
  \includegraphics[scale=.8]{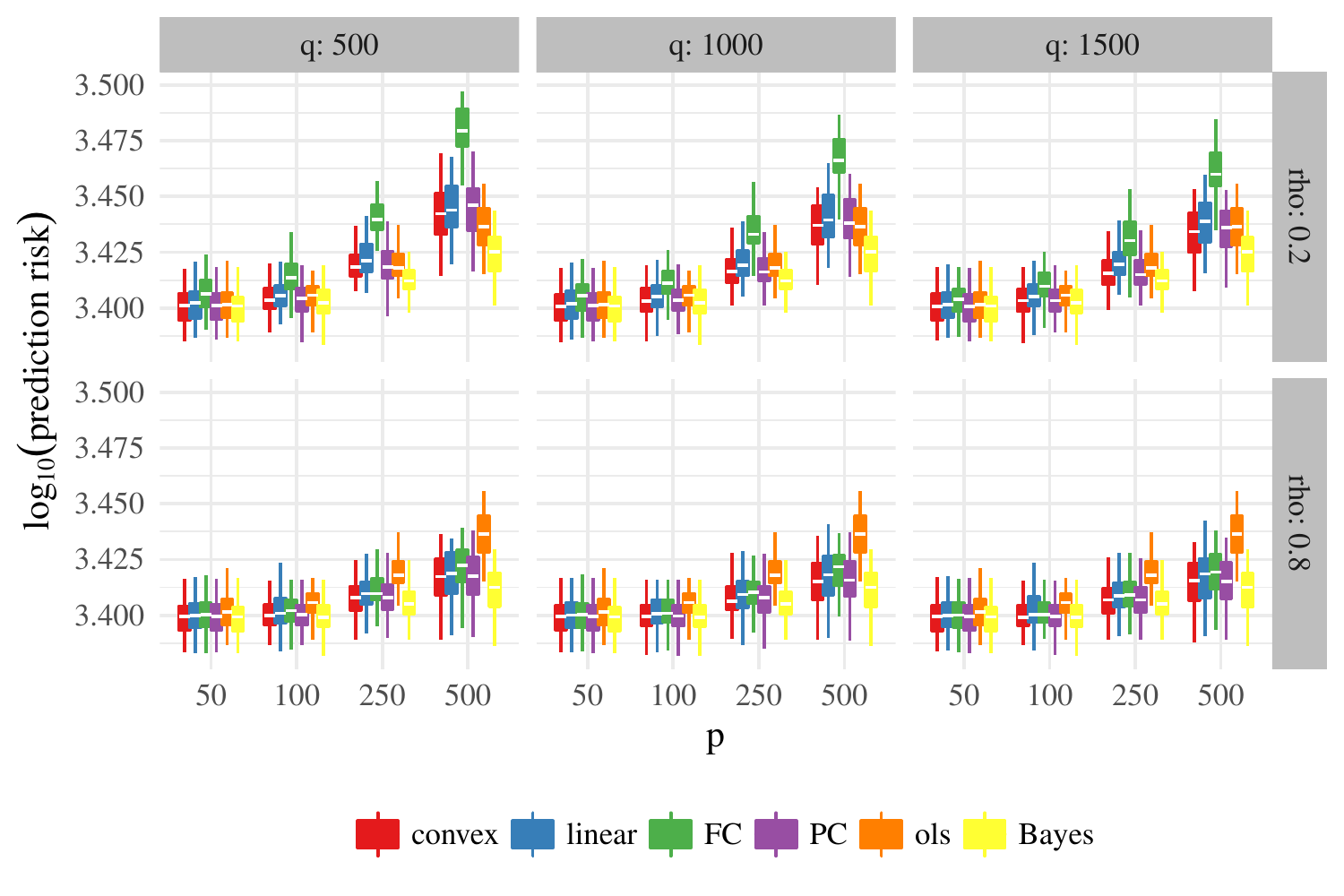}
  \caption{Prediction error on the test set for all methods.}
  \label{fig:bestlamtest}
\end{figure}

\section{Real data examples}
\label{sec:data-example}

We present results for two different types of real data: a collection
of genetics data and an astronomy dataset.

\subsection{Genetics}
\label{sec:genetics}

The first data we examine are a collection of short-read RNA sequences. The
data are publicly available\footnote{From Jun Li:
  \url{http://www3.nd.edu/~jli9/mseq/data_top100.zip}} and were first examined
by \citet{Li2010}, who suggest a Poisson linear model to
predict read counts based on the surrounding nucleotides. An
implementation of this method is provided in the \texttt{R} package
\texttt{mseq}, described by \citet{Li2010} and available from the
\href{https://cran.r-project.org/src/contrib/Archive/mseq/}{CRAN
  archive}.

In all, there are eight data files from three research groups. Three
datasets are due to \citet{MortazaviWilliams2008}, which
mapped mouse transcriptomes from brain, liver, and skeletal muscle
tissues. \citet{WangSandberg2008} collected data from
15 different human tissues which have been merged into three groups
based on tissue similarities. Finally, 
\citet{CloonanForrest2008} examined RNA sequences from mouse
embryonic stem cells and embryoid bodies. In all cases, we use the top
100 highly-expressed genes as well as the surrounding sequences to
predict expression counts as in \citep{Li2010}. \autoref{tab:genesN} shows the sample size $n$ for each of
the eight data sets.

Following \citet{Li2010} and \citet{DalpiazHe2013}, we examine each of these
datasets separately. In order to build the model, we must select how
many surrounding nucleotides to use for prediction. As
nucleotides are factors (taking levels C,T,A,G), a window of
$k$ surrounding nucleotides will give $p=3(k+1)+1$ predictors of
which one is the intercept. We could also use dinucleotide
pairs (or higher interactions), as in \citet{Li2010}, resulting in
$p=15(k+1)+1$ predictors. For our illustration, we follow
\citet{MaMahoney2015}, who also apply different compressed
linear regression methods to these data, and use $k=39$. 
\begin{table}[t!]
\centering
\begin{tabular}{ccccccccc}
  \hline\hline
  dataset & B1 & B2 & B3 & G1 & G2 & W1 & W2 & W3 \\  
  $n$ & 157614 &125056 & 103394&51751&64966&146828&171776 &143570\\ 
   \hline
\end{tabular}
\caption{Number of observations for each of the 8 genetics data sets.}
\label{tab:genesN}
\end{table}

The results of our analysis are shown in
\autoref{fig:genes}. For each dataset, which we denote by the
first letter of the senior author's last name (W, B, and G respectively) followed
by a number, we split the data randomly into
75\% training data and 25\% testing data. We then compress the training set using
$q=10000$ and $q=20000$. We
apply each of the regularized compressed methods, choosing $\lambda$ by
generalized cross validation and then evaluate the estimators by
making predictions on the test set. We repeat this procedure 10 times and present
the average of the log test error relative to OLS. Across data sets, $q=10000$ results in data
reductions between 74\% and 93\% (meaning $0.26 \geq q/n \geq 0.07$) while $q=20000$ gives reductions
between 48\% and 84\%. 
\begin{figure}[t!]
  \centering
  \includegraphics[scale=.8]{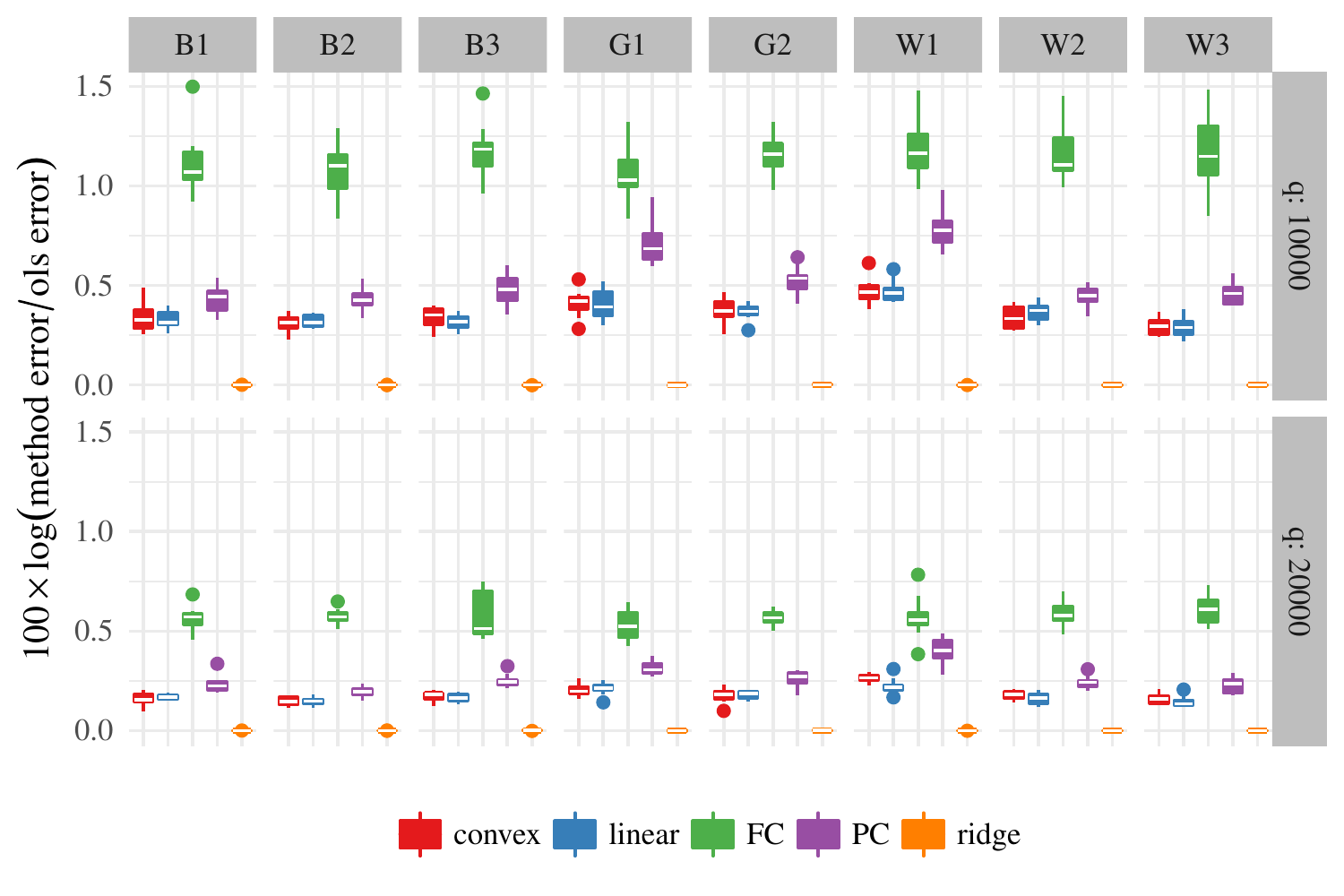}
  \caption{Results of each method on 10 replications of training test
    splits on each of the eight genetics data sets. The results are
    percentage increase in test error relative to ordinary least
    squares. }
  \label{fig:genes}
\end{figure}

For these data, ridge and OLS give equivalent test set
performance (differing by less than .001\%) across all
data sets. The similarity between these two estimators suggests that
compression will be hard-pressed to yield 
improvements since the signal-to-noise ratio is so high. Additionally,
previous analyses have found that the coefficients are roughly centered around
zero with most quite small. While none of
our methods are able to beat OLS, their performance is not
much worse. The worst method is always full compression. The linear
combination and the convex combination are nearly equivalent, while partial
compression is just slightly worse. Even for full compression, its
worst performance across all data sets and over both values of $q$ is
less than 1.5\% worse than OLS. So even for the worst performing method, large amounts of
compression result in a negligible increase in test set error.

\subsection{Galaxies}
\label{sec:galaxies}



To create another data set with larger coefficients possibly
not centered around zero, we examine a collection of galaxies and
attempt to predict their redshifts, a measurement of how far the
galaxy is from earth, from their flux measurements, the intensity of
light at different wavelengths. We
downloaded $n=5000$ random galaxies from the
Sloan Digital Sky Survey
12~\citep{AlamAlbareti2015}. Following~\citet{RichardsFreeman2009}, we
restricted our sample to contain only galaxies with estimated 
redshift $z>0.05$ and plate number less than 7000 (currently, larger
plate numbers may contain galaxies without observed flux
measurements). We also removed galaxies which have been flagged as
having unreliable redshift estimates. While 5000 galaxies is quite a
bit smaller than the genetics data we used above, we note that this is just
a tiny subsample of the nearly 1.5 million currently available. Each galaxy has 3693 flux measurements measured at
wavelengths between 3650\AA{} and 10400\AA, but following previous
analyses, we truncate those below 4376\AA{} and above 10247\AA{}. Finally, because the
measurements are quite noisy, we smooth the flux measurements using a
regression spline with 125 equally spaced knots on the log scale. 
We use the 125 spline
coefficients from the smoothed
versions for all 5000 galaxies as our design matrix.

As above, we randomly divide the data into 75\% training
data and 25\% testing data, estimate each method on the training set,
and predict the test data. We repeat the experiment 10 times using
$q=1000$ and $q=2000$. The results are shown in
\autoref{fig:galaxyres}. 
\begin{figure}[t!]
  \centering
  \includegraphics[scale=.8]{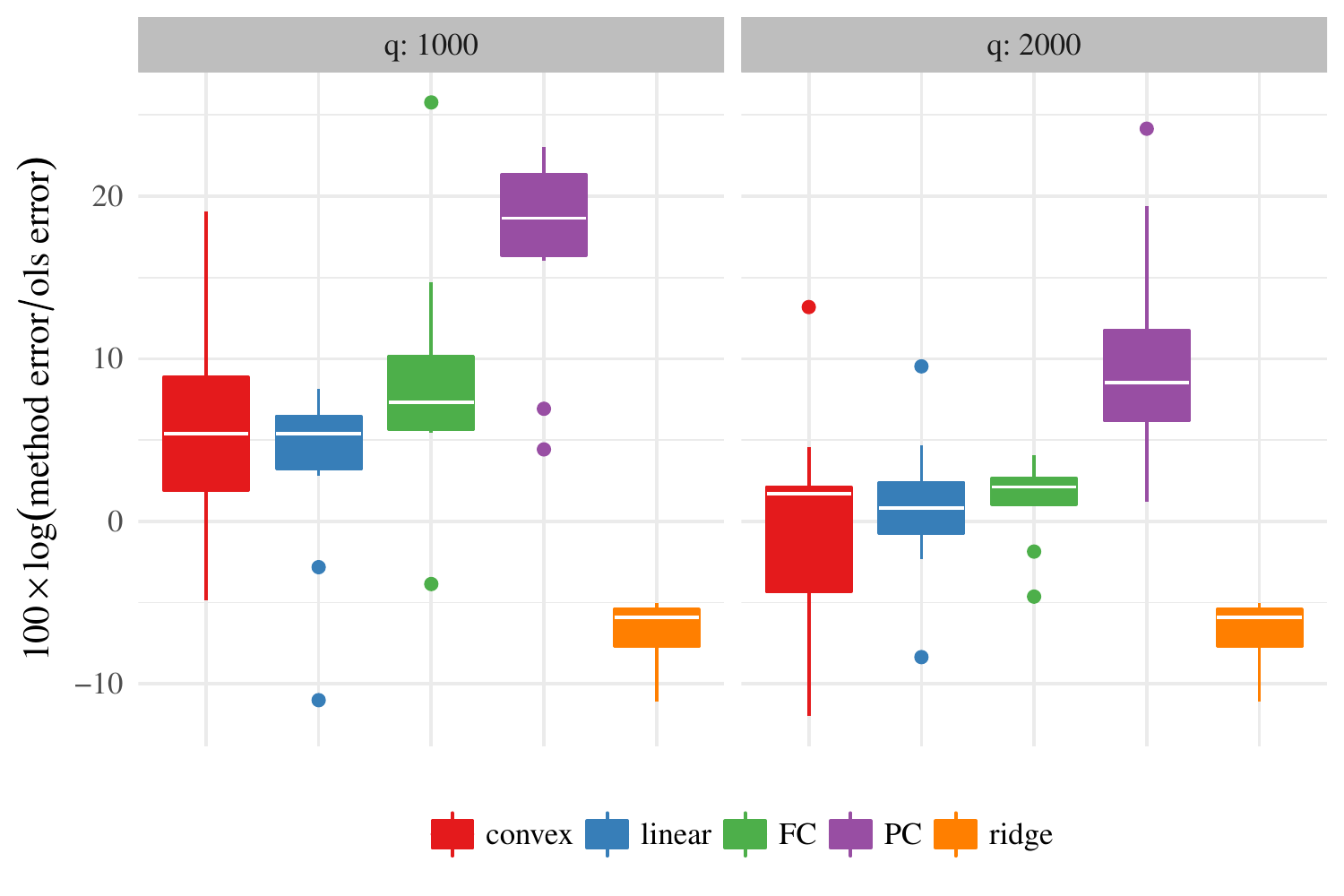}
  \caption{Results of each method on 10 replications of training test
    splits on each of the eight genetics data sets. The results are
    percentage increase in test error relative to ordinary least
    squares.}
  \label{fig:galaxyres}
\end{figure}
In this case, ridge regression has better performance than ordinary
least squares on every replication, improving test error between 5 and
10\%. None of the compressed methods do quite this well. When
$q=1000$, the convex combination and linear combination are generally
between 3 and 8\% worse than OLS. Interestingly, in this case, full compression
is not much different, but partial compression is significantly worse
with test error between about 16\% and 21\% worse than OLS. With
$q=2000$, the convex combination is again the best, with 
about half of the replications outperforming OLS,
while the linear combination is not far behind. The convex combination
had better test error than ridge regression in one of the ten replications.

\section{Theoretical analysis}
\label{sec:theory-results}

To develop a better understanding of the relationship between
the compressed regression methods proposed here and standard full-data
techniques, we derive expressions for the expectation and variance of
full and partial compression estimators as well as their bias and
variance. For comparison, we first present the standard analogues for
ridge regression.

\subsection{Standard results for ridge regression}
\label{sec:stand-results-ridge}

Write the singular value decomposition of
$\X=UDV^\top$. Then define the ridge regression estimator of $\bstar$
as in equation \eqref{eq:ridge}.

The bias and variance of the ridge regression estimator conditional on
the design matrix are given in the first result.
\begin{lemma}
  \begin{align*}
    \E[\bstar - \hat\b_{ridge}(\lambda)\given \X] 
    &= [I - (\X^\top\X + \lambda I_p)^{-1}\X^\top\X]\bstar
    = (I + \lambda^{-1} \X^\top\X)^{-1}\bstar\\
    &= \lambda V(D^2 + \lambda I_p)^{-1}V^\top\bstar.\\
    \V[\hat\b_{ridge}(\lambda) \given \X] 
    &= \sigma^2 (\X^\top\X + \lambda I_p)^{-1} \X^\top \X (\X^\top\X +
      \lambda I_p)^{-1}\\
    &= \sigma^2V(D^2 + \lambda I_p)^{-1} D^2 (D^2 + \lambda I_p)^{-1} V^\top.
  \end{align*}
\end{lemma}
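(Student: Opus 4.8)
The plan is to begin from the closed-form solution of the penalized problem in \eqref{eq:ridge}. Setting the gradient of $\norm{\X\b - \Y}_2^2 + \lambda\norm{\b}_2^2$ to zero gives the normal equations $(\X^\top\X + \lambda I_p)\b = \X^\top\Y$, so that $\hat\b_{ridge}(\lambda) = (\X^\top\X + \lambda I_p)^{-1}\X^\top\Y$; the inverse exists for every $\lambda > 0$ since $\X^\top\X \succeq 0$. Because the model \eqref{eq:linearModel} gives $\E[\Y \given \X] = \X\bstar$, taking the conditional expectation and using linearity yields $\E[\hat\b_{ridge}(\lambda)\given\X] = (\X^\top\X + \lambda I_p)^{-1}\X^\top\X\,\bstar$, whence $\E[\bstar - \hat\b_{ridge}(\lambda)\given\X] = [I - (\X^\top\X + \lambda I_p)^{-1}\X^\top\X]\bstar$, which is the first claimed expression.

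Next I would establish the two equivalent rewritings of the bias purely by matrix algebra. Factoring out the common inverse gives $I - (\X^\top\X + \lambda I_p)^{-1}\X^\top\X = (\X^\top\X + \lambda I_p)^{-1}[(\X^\top\X + \lambda I_p) - \X^\top\X] = \lambda(\X^\top\X + \lambda I_p)^{-1}$, and pulling the scalar $\lambda$ inside the inverse turns this into $(I + \lambda^{-1}\X^\top\X)^{-1}$, the second form. For the third form, substitute the SVD $\X = UDV^\top$ so that $\X^\top\X = VD^2V^\top$ and, using $V^\top V = VV^\top = I_p$, $\X^\top\X + \lambda I_p = V(D^2 + \lambda I_p)V^\top$. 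Inverting this orthogonally-diagonalized matrix gives $\lambda(\X^\top\X + \lambda I_p)^{-1} = \lambda V(D^2 + \lambda I_p)^{-1}V^\top$, completing the bias identities.

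For the variance, I would use that $\hat\b_{ridge}(\lambda)$ is linear in $\Y$ with coefficient matrix $M := (\X^\top\X + \lambda I_p)^{-1}\X^\top$, so that $\V[\hat\b_{ridge}(\lambda)\given\X] = M\,\V[\Y\given\X]\,M^\top$. Under the homoskedastic assumption $\V[\Y\given\X] = \sigma^2 I_n$, this collapses to the sandwich $\sigma^2(\X^\top\X + \lambda I_p)^{-1}\X^\top\X(\X^\top\X + \lambda I_p)^{-1}$, the first claimed variance expression. Substituting the SVD factorizations above and cancelling the orthogonal factors via $V^\top V = I_p$ reduces the middle product to $(D^2 + \lambda I_p)^{-1}D^2(D^2 + \lambda I_p)^{-1}$ sandwiched between $V$ and $V^\top$, giving the diagonalized form.

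There is no deep obstacle here; the result is a standard computation, and the main work is bookkeeping. The only points requiring genuine care are fixing the SVD convention so that $V$ is a true $p\times p$ orthogonal matrix (so that $VV^\top = I_p$ and the substitution into the inverse remains valid even when $\X$ is rank-deficient, since $\lambda > 0$ regularizes the zero singular values), and checking that the scalar manipulation $\lambda A^{-1} = (\lambda^{-1}A)^{-1}$ is applied correctly when passing between the first two forms of the bias.
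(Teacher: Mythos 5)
Your proof is correct and is exactly the standard computation on which this lemma rests; the paper itself states the result without providing a proof, treating it as classical. The key steps you identify---linearity of $\hat\b_{ridge}(\lambda)$ in $\Y$ with coefficient matrix $(\X^\top\X+\lambda I_p)^{-1}\X^\top$, the factorization $I - (\X^\top\X+\lambda I_p)^{-1}\X^\top\X = \lambda(\X^\top\X+\lambda I_p)^{-1}$, and the use of the thin SVD with $V$ a genuine $p\times p$ orthogonal matrix so that $\X^\top\X+\lambda I_p = V(D^2+\lambda I_p)V^\top$---are all handled correctly and suffice.
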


A standard corollary gives the squared bias and trace of the variance
of the ridge regression estimator conditional on the design.
\begin{corollary}
  \begin{align*}
    \bias^2\left(\hat\b_{ridge}(\lambda)\given\X\right)
    &= \lambda^2 \bstar^\top V(D^2 + \lambda I_p)^{-2} V^\top \bstar.\\
    \tr \left( \V[\hat\b_{ridge}(\lambda) \given \X]\right)
    &= \sigma^2 \sum_{j=1}^p \frac{d_j^2} {(d_j^2 + \lambda)^2}.
  \end{align*}
\end{corollary}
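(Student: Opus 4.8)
The plan is to derive both identities directly from the closed forms for the bias vector and the variance matrix established in the preceding Lemma, exploiting the orthogonality of the right singular matrix $V$ (so that $V^\top V = VV^\top = I_p$) to cancel factors. No new machinery is needed; everything follows from the SVD representation already in hand.

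For the squared bias, I would interpret $\bias^2(\hat\b_{ridge}(\lambda)\given\X)$ as the squared Euclidean norm of the bias vector $b := \E[\hat\b_{ridge}(\lambda) - \bstar\given\X]$, which by the Lemma equals $-\lambda V(D^2+\lambda I_p)^{-1}V^\top\bstar$. Then $\bias^2 = b^\top b = \lambda^2\,\bstar^\top V(D^2+\lambda I_p)^{-1}V^\top V(D^2+\lambda I_p)^{-1}V^\top\bstar$. Substituting $V^\top V = I_p$ collapses the two resolvent factors into one, and since $(D^2+\lambda I_p)^{-1}$ is symmetric this yields $\lambda^2\,\bstar^\top V(D^2+\lambda I_p)^{-2}V^\top\bstar$, exactly as claimed.

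For the trace of the variance, I would start from $\V[\hat\b_{ridge}(\lambda)\given\X] = \sigma^2 V(D^2+\lambda I_p)^{-1}D^2(D^2+\lambda I_p)^{-1}V^\top$ and apply the cyclic invariance of the trace to move $V^\top$ next to $V$, again using $V^\top V = I_p$ to eliminate both copies of $V$. This reduces the trace to $\sigma^2\tr\big((D^2+\lambda I_p)^{-1}D^2(D^2+\lambda I_p)^{-1}\big)$. Because $D$ and $(D^2+\lambda I_p)^{-1}$ are all diagonal, the product is diagonal with $j$th entry $d_j^2/(d_j^2+\lambda)^2$, and summing over the diagonal gives $\sigma^2\sum_{j=1}^p d_j^2/(d_j^2+\lambda)^2$.

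There is no substantive obstacle here; the argument is essentially bookkeeping once the Lemma is available. The only point requiring care is the convention that $V$ is the full $p\times p$ orthogonal factor (consistent with the Lemma's use of $V(\cdot)V^\top$ to represent functions of $\X^\top\X$), so that the cancellations $V^\top V = VV^\top = I_p$ are legitimate. If one instead worked with a thin SVD in which $V$ is non-square, one would either extend $V$ to an orthonormal basis of $\R^p$ or argue on the range of $\X^\top\X$; this leaves the final scalar expressions unchanged, since the omitted directions correspond to zero singular values and contribute nothing to either sum.
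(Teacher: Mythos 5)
Your proposal is correct and matches the paper's (implicit) argument: the paper presents this as an immediate consequence of the preceding lemma, obtained exactly as you describe by taking the squared Euclidean norm of the bias vector and the trace of the variance matrix, with the orthogonality $V^\top V = I_p$ and cyclicity of the trace doing the bookkeeping. Your remark about the full versus thin SVD is a sensible precaution but is not an issue here since $n \gg p$ and $V$ is the $p\times p$ orthogonal factor.
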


In the next sections, we will derive approximations to these
quantities for the fully and partially compressed
ridge regression estimators of $\bstar$.

\subsection{Mean and variance of the compressed estimators}
\label{sec:results-via-taylor}

Because all of our estimators depend on $(\X^\top Q^\top Q \X +
\lambda I_p)^{-1}$, a generally intractable quantity, we derive
approximate results via a first order
Taylor expansion of the estimator with respect to the matrix $Q^\top
Q$. The proofs as well as intermediary results are included
in the Supplementary Material.

Following \citet{MaMahoney2015}, we use the Taylor expansion of
$\hat\b$ as a function of $A:=\frac{s}{q}Q^\top Q$ around $I_n$ to derive results conditional
on $Y$ and $\X$ (taking expectations over $Q$) as well as results unconditional
on $Y$. The first case reflects the randomness in the compression
algorithm relative to the more computationally demanding ridge
regression. The second is useful for comparing the compressed
procedures with ridge regression by including randomness introduced
through the data generating process and through the compression
algorithm.  In all cases, these results are 
conditional on the design 
matrix as was the case 
above. For convenience of expression, define
\begin{align*}
  M &:= (\X^\top  \X + \lambda 
      I_p)^{-1} \X^\top = V(D^2+\lambda I_p)^{-1} D U^\top, \textrm{ and}\\
  H &:=\X(\X^\top  \X + \lambda 
      I_p)^{-1} \X^\top = UD(D^2+\lambda I_p)^{-1} D U^\top = \X M.
\end{align*}
Finally, for these results we will assume that $q=c n$ for some
$0<c\leq 1$ which is fixed. We discuss this assumption further in the
remark below.

\begin{theorem}
  \label{thm:coefFC}
  For full compression,
  \begin{align}
    \E[\hat\b_{FC}(\lambda)\given \X, Y] 
    &= \hat\b_{ridge}(\lambda) + o_P(1)\\
    \V[\hat\b_{FC}(\lambda)\given \X, Y]
    &=  \frac{(s-2)_+}{q} M\hat{e}\hat{e}^\top M^\top + \frac{1}{q} 
                                \hat{e}^\top\hat{e} MM^\top  +
     o_P(1),
  \end{align}
  where $\hat e=(I-H)Y$. Furthermore,
  \begin{align}
    \E[\hat\b_{FC}(\lambda)\given \X] 
    &= [I-\lambda(\X^\top\X +\lambda I)^{-1}]\b + o_P(1)\\
    \V[\hat\b_{FC}(\lambda)\given \X] 
    &= \sigma^2V(D^2+\lambda I)^{-1}D^2(D^2+\lambda I)^{-1}V^\top+\\ 
    &\quad + \frac{(s-2)_+}{q} M
      (I_n-H)(\X\bstar\bstar^\top\X^\top+\sigma^2I_n)(I_n-H)M^\top\\
    &\quad+ \frac{1}{q} \left( \sigma^2\tr((I_n-H)^2)MM^\top
+ \bstar^\top
      \X^\top(I_n-H)^2\X\bstar MM^\top\right) + o_P(1),
  \end{align}
  where $\hat Y = HY$.

\end{theorem}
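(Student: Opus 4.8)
The plan is to linearize $\hat\b_{FC}(\lambda)$ in the (centered, renormalized) sketched Gram matrix and then read off its mean and variance from the low-order moments of the entries of $Q$, exactly as the text indicates by expanding around $A=I_n$. With the renormalization $\E Q^\top Q = I_n$, set $A := Q^\top Q$ (equivalently $A=\tfrac{s}{q}\tilde Q^\top\tilde Q$ for the raw sparse Bernoulli $\tilde Q$), so $\E A = I_n$ and $\hat\b_{FC}(\lambda) = (\X^\top A \X + \lambda I_p)^{-1}\X^\top A Y =: f(A)$. Writing $E := A - I_n$ and factoring $\X^\top A\X + \lambda I_p = (\X^\top\X+\lambda I_p)(I_p + \Delta)$ with $\Delta := (\X^\top\X+\lambda I_p)^{-1}\X^\top E\X$, the Neumann expansion $(I_p+\Delta)^{-1} = I_p - \Delta + \Delta^2 - \cdots$ gives, after collecting the terms linear in $E$,
\[
\hat\b_{FC}(\lambda) = \hat\b_{ridge}(\lambda) + M E Y - M E\,\X\hat\b_{ridge}(\lambda) + R = \hat\b_{ridge}(\lambda) + M E\,\hat e + R,
\]
where $R$ gathers all terms quadratic and higher in $E$. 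The collapse to $M E\hat e$ uses $\X\hat\b_{ridge}(\lambda)=HY$, so the two linear pieces combine as $ME(I_n-H)Y = ME\hat e$.

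Granting $R=o_P(1)$, the first displayed pair (conditional on $\X,Y$, averaging over $Q$) is immediate: since $\E E = 0$ we get $\E[\hat\b_{FC}(\lambda)\given\X,Y]=\hat\b_{ridge}(\lambda)+o_P(1)$, and $\V[\hat\b_{FC}(\lambda)\given\X,Y]=M\,\V[E\hat e]\,M^\top+o_P(1)$. The whole content is $\V[E\hat e]$, which I would evaluate from the single-row moments of $Q$. Because the rows are i.i.d.\ and each entry $w$ satisfies $\E w=\E w^3=0$ and $\E w^2 = \E w^4 = 1/s$, the covariance of two entries of $E$ reduces to $\tfrac{s^2}{q}$ times a single-row covariance supported only on the coincidence patterns ``all four indices equal'' (weight $\E w^4 - (\E w^2)^2$) and the two ``crossed pairs'' (weight $(\E w^2)^2$). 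Contracting these against $\hat e\hat e^\top$ collapses the result into the stated rank-one piece $\propto \hat e\hat e^\top$ and isotropic piece $\propto \norm{\hat e}_2^2\, I_n$, with the excess fourth moment producing the factor $s-2$ (truncated to $(s-2)_+$ to keep the leading matrix positive semidefinite); conjugating by $M$ yields $\tfrac{(s-2)_+}{q}M\hat e\hat e^\top M^\top + \tfrac1q \hat e^\top\hat e\, M M^\top$.

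The unconditional statements follow by averaging the conditional ones over $Y = \X\bstar+\sigma\varepsilon$. For the mean, $\E[\hat\b_{FC}(\lambda)\given\X]=\E_Y[\hat\b_{ridge}(\lambda)]+o_P(1)=[I_p-\lambda(\X^\top\X+\lambda I_p)^{-1}]\bstar+o_P(1)$. For the variance I would apply the law of total variance: the ``between'' term $\V_Y[\E[\hat\b_{FC}\given\X,Y]]=\V[\hat\b_{ridge}(\lambda)\given\X]=\sigma^2 V(D^2+\lambda I_p)^{-1}D^2(D^2+\lambda I_p)^{-1}V^\top$ (from the ridge Lemma) supplies the first line, while the ``within'' term $\E_Y[\V[\hat\b_{FC}\given\X,Y]]$ replaces $\hat e\hat e^\top$ by $\E_Y[\hat e\hat e^\top]=(I_n-H)(\X\bstar\bstar^\top\X^\top+\sigma^2 I_n)(I_n-H)$ and $\hat e^\top\hat e$ by its trace $\sigma^2\tr((I_n-H)^2)+\bstar^\top\X^\top(I_n-H)^2\X\bstar$, reproducing the remaining two lines exactly.

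The main obstacle is controlling $R$ to justify every $o_P(1)$, and the obvious route fails: $A=\tfrac{s}{q}\tilde Q^\top\tilde Q$ has rank at most $q=cn<n$, so $E=A-I_n$ has an eigenvalue $-1$ on the $(1-c)n$-dimensional kernel of $\tilde Q$ and its spectral norm $\norm{E}_2\ge 1$ does not vanish. The key is that $R$ only ever sees $E$ through its compression onto the $p$-dimensional column space of $\X$: every retained remainder term carries a factor $\Delta=(\X^\top\X+\lambda I_p)^{-1}\X^\top E\X$ or $U^\top E$, and these low-rank images concentrate at rate $1/q$ about their zero means even though $E$ does not. I would therefore bound $\E\norm{\Delta}_F^2$ by the same single-row moment bookkeeping to show $\Delta=o_P(1)$, which both validates the Neumann series and forces the quadratic term (of order $\norm{\Delta}_2$ times the first-order term) to be negligible. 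The fixed-ratio assumption $q=cn$ is precisely what keeps $\norm{\hat e}_2^2/q$ and the analogous scalings order one, so that the retained first-order variance is $\Theta(1)$ and $R$ is genuinely of smaller order.
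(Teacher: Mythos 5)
Your proposal is correct and follows essentially the same route as the paper: a first-order expansion of $\hat\b_{FC}(\lambda)$ in $A=\tfrac{s}{q}Q^\top Q$ about $I_n$ yielding the linear term $M(A-I_n)\hat e$ (the paper obtains it by vec/Kronecker matrix calculus rather than a Neumann series, but the expansion is identical), the same three coincidence-pattern moments of the sparse Bernoulli entries for $\V[\vecfun{A}]$ to produce the $(s-2)_+$ and isotropic pieces, and the law of total expectation/variance with $\E[\hat e\hat e^\top]$ and $\E[\hat e^\top\hat e]$ for the unconditional statements. Your handling of the remainder---observing that $\norm{A-I_n}$ itself does not vanish and that one must instead control the compressed quantity $\Delta=(\X^\top\X+\lambda I)^{-1}\X^\top(A-I_n)\X$---is if anything more careful than the paper's remark, which appeals directly to a spectral-norm bound on $A-I_n$.
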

\begin{theorem}
  \label{thm:coefPC}
  For partial compression,
  \begin{align}
    \E[\hat\b_{PC}(\lambda)\given \X, Y] 
    &= \hat\b_{ridge}(\lambda) + o_P(1)\\
    \V[\hat\b_{PC}(\lambda)\given \X, Y] &=  \frac{(s-2)_+}{q}M\hat{Y}
                                \hat{Y}^\top M^\top + \frac{1}{q} 
                                \hat{Y}^\top\hat{Y} MM^\top
                                             +o_P(1),
  \end{align}
  where $\hat{Y} = H Y$.
  Additionally,
  \begin{align}
    \E[\hat\b_{PC}(\lambda)\given \X] 
    &= [I-\lambda(\X^\top\X +\lambda I)^{-1}]\b+ o_P(1)\\
    \V[\hat\b_{PC}(\lambda)\given \X] 
    &= \sigma^2V(D^2+\lambda I)^{-1}D^2(D^2+\lambda
                      I)^{-1}V^\top\\
    &\quad +\frac{(s-2)_+}{q}M
      H(\X\bstar\bstar^\top\X^\top+\sigma^2I_n)H M^\top \\
    &\quad+ \frac{\sigma^2\tr(H^2)}{q} MM^\top  +\frac{1}{q}
      \bstar^\top \X^\top H^2 \X \bstar MM^\top  + o_P(1).
  \end{align}
\end{theorem}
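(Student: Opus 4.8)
The plan is to expand $\hat\b_{PC}(\lambda)$ to first order in the centered, renormalized Gram matrix and then read off the two sets of moments by first conditioning on $(\X,Y)$ and then integrating over $Y$. This parallels the argument for \autoref{thm:coefFC}, with the simplification that here the cross-product term $\X^\top Y$ carries no factor of $Q^\top Q$. Writing $A$ for the renormalized Gram matrix appearing in the estimator (so that $\E A = I_n$ under the convention of \autoref{sec:Q}), we have $\hat\b_{PC}(\lambda) = (\X^\top A\X + \lambda I)^{-1}\X^\top Y$, and at $A = \E A = I_n$ this is exactly $\hat\b_{ridge}(\lambda)$. Setting $\Delta := A - I_n$, so that $\E[\Delta\given\X,Y] = 0$, and using the resolvent identity $\partial_t(B+tC)^{-1} = -(B+tC)^{-1}C(B+tC)^{-1}$ with $B = \X^\top\X+\lambda I$, a first-order Taylor expansion gives
\[
(\X^\top A\X+\lambda I)^{-1} = (\X^\top\X+\lambda I)^{-1} - (\X^\top\X+\lambda I)^{-1}\X^\top\Delta\X(\X^\top\X+\lambda I)^{-1} + \text{(remainder)}.
\]
Multiplying by $\X^\top Y$ and using $M = (\X^\top\X+\lambda I)^{-1}\X^\top$ together with $\hat Y = HY = \X(\X^\top\X+\lambda I)^{-1}\X^\top Y$, the leading correction collapses to
\[
\hat\b_{PC}(\lambda) = \hat\b_{ridge}(\lambda) - M\Delta\hat Y + \text{(remainder)}.
\]

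For the results conditional on $(\X,Y)$, the matrices $M$ and $\hat Y$ are fixed and only $\Delta$ is random. Since $\E\Delta = 0$, the first-order term has mean zero and $\E[\hat\b_{PC}(\lambda)\given\X,Y] = \hat\b_{ridge}(\lambda) + o_P(1)$, the $o_P(1)$ absorbing the mean of the second-order remainder, which is $O(1/q)$. To leading order, $\V[\hat\b_{PC}(\lambda)\given\X,Y] = M\,\V[\Delta\hat Y\given\X,Y]\,M^\top + o_P(1)$, so the whole computation reduces to the covariance of the vector $\Delta\hat Y$. Entrywise, $\mathrm{Cov}\big((\Delta\hat Y)_k,(\Delta\hat Y)_l\big) = \sum_{a,b}\mathrm{Cov}(A_{ka},A_{lb})\,\hat Y_a\hat Y_b$, so the key technical step is evaluating the fourth-moment (covariance) tensor $\mathrm{Cov}(A_{ka},A_{lb})$ for the sparse-Bernoulli sketch. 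I expect this to contract against $\hat Y_a\hat Y_b$ into two structural pieces: a trace-type term $\frac1q(\hat Y^\top\hat Y)I_n$ from the $\delta_{kl}\delta_{ab}$ pairing, and a rank-one term $\frac{(s-2)_+}{q}\hat Y\hat Y^\top$ whose coefficient carries the excess kurtosis of the $\{-1,0,1\}$ entries (truncated at zero so the variance stays nonnegative). Conjugating $\V[\Delta\hat Y\given\X,Y] = \frac{(s-2)_+}{q}\hat Y\hat Y^\top + \frac1q(\hat Y^\top\hat Y)I_n$ by $M$ then reproduces the two displayed variance terms with $\hat Y = HY$.

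For the results unconditional on $Y$, I would apply the laws of total expectation and total variance over $Y\sim(\X\bstar,\sigma^2 I_n)$, writing $\E_Y,\V_Y$ for the moments over $Y$ with $\X$ fixed. The mean is immediate: $\E[\hat\b_{PC}(\lambda)\given\X] = \E_Y[\hat\b_{ridge}(\lambda)] + o_P(1) = [I - \lambda(\X^\top\X+\lambda I)^{-1}]\bstar + o_P(1)$. For the variance, $\V[\hat\b_{PC}(\lambda)\given\X] = \V_Y\big(\E[\hat\b_{PC}(\lambda)\given\X,Y]\big) + \E_Y\big(\V[\hat\b_{PC}(\lambda)\given\X,Y]\big)$. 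The first term is $\V_Y(\hat\b_{ridge}(\lambda)) = \sigma^2 V(D^2+\lambda I)^{-1}D^2(D^2+\lambda I)^{-1}V^\top$, the standard ridge variance. For the second term I substitute $\hat Y = HY$ and use $\E_Y[YY^\top] = \X\bstar\bstar^\top\X^\top + \sigma^2 I_n$, which gives $\E_Y[\hat Y\hat Y^\top] = H(\X\bstar\bstar^\top\X^\top+\sigma^2 I_n)H$ for the rank-one piece and, by cyclicity of the trace and symmetry of $H$, $\E_Y[\hat Y^\top\hat Y] = \sigma^2\tr(H^2) + \bstar^\top\X^\top H^2\X\bstar$ for the scalar multiplying $MM^\top$; together these are exactly the three remaining displayed terms.

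The main obstacle is not the algebra above but making the first-order expansion rigorous, i.e. controlling the Taylor remainder in the relevant bilinear-form sense so that it is genuinely of smaller order than the displayed $O(1/q)$ terms. With $q = cn$ fixed, individual entries of $\Delta = A - I_n$ are only $O_P(q^{-1/2})$ while there are $n^2$ of them, so the remainder must be tamed through the smoothing by $\X$, $M$, and $\hat Y$ rather than by a crude operator-norm bound; this uses the same moment and concentration estimates as the covariance computation and is where the $q=cn$ assumption enters. Pinning down the covariance tensor with the precise constant $(s-2)_+$, rather than an unspecified kurtosis constant, is the other delicate point, and both are exactly the intermediary results deferred to the Supplementary Material.
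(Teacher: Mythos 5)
Your proposal matches the paper's proof in all essentials: the same first-order expansion $\hat\b_{PC}=\hat\b_{ridge}-M(A-I_n)\hat Y+R_A$ (the paper derives it via vec/Kronecker matrix calculus, you via the resolvent identity, but the linearization is identical), the same reduction of the conditional variance to the covariance tensor of the sparse-Bernoulli sketch yielding the $(s-2)_+$ and trace terms, and the same passage to the unconditional moments via total expectation/variance together with $\E_Y[\hat Y\hat Y^\top]=H(\X\bstar\bstar^\top\X^\top+\sigma^2I_n)H$ and $\E_Y[\hat Y^\top\hat Y]=\sigma^2\tr(H^2)+\bstar^\top\X^\top H^2\X\bstar$. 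The points you flag as delicate (the exact fourth-moment constant and the remainder control under $q=cn$) are precisely the intermediary lemmas the paper supplies, so the proposal is correct and follows the paper's route.
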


\begin{remark}
In each
expression above, the Taylor series is valid whenever higher-order
terms are small. In our case, the higher-order terms are
$o_P(\norm{A-I}^2)$ under the expansion, for some matrix norm
$\norm{\cdot}$. Here $o_P(\cdot)$ is with 
respect to the randomness in $A$ through
$Q$. As $Q$ is independent of the data, one can examine how
large these deviations are likely to be. In particular, using results
similar to the Tracy-Widom law \citep[][Proposition
2.4]{RudelsonVershynin2010}, one can show that
$\norm{A-I}=O_P(\sqrt{n/q})$. Therefore, taking $q=c n$ for some $0<
c\leq 1$ means that the remainder is $o_P(1)$. This is in contrast
with results of \citet{MaMahoney2015}, for two reasons: (1) their sampling mechanism
is allowed to depend on the data where ours is not and (2) they can
lose rank from the compression.
In our case, $(\X^\top Q^\top Q\X + \lambda I)$ is full rank for all $\lambda>0$ regardless of the
rank of  $\X^\top Q^\top Q\X$, so the Taylor series is always valid.
\end{remark}

On average, both procedures are the same as ridge regression, up to
higher order remainder terms. The variance, however, is larger for the
same value of $\lambda$. Finally, we note that these results make explicit another tradeoff
between computation and estimation: taking $s\leq 2$ eliminates one
term in the variance expansion completely at the expense of denser $Q$.

In order to compare our procedure to ridge regression more directly, we
examine the Taylor expansions of the
mean squared error (rather than expanding the estimator and then
taking expectations) in the next section.

\subsection{Mean squared error performance}

Rather than looking at the estimated coefficients, we could also look
at the mean squared error (MSE) directly. If we ignore remainder
terms, we can try to minimize the
MSE over $\lambda$ and evaluate the quality of the resulting oracle
estimator. As suggested in our simulations, there exists a $\lambda_*$
such that compressed ridge regression actually achieves lower MSE than
ridge regression does. 

\begin{theorem}
  \label{thm:bias2taylor}
  The squared-bias of the fully compressed estimator is:
  \begin{align*}
    \bias^2\left(\hat\b_{FC} \given\X, Q\right)
    &= \lambda^2 \bstar^\top V(D^2 + \lambda I)^{-2} V^\top \bstar+
      2\bstar^\top(M\X-I) M (A-I) (I -  
      H)\X\bstar +o_P(1)\\
    \bias^2\left(\hat\b_{FC}\given \X\right)
    &= \lambda^2 \bstar^\top V(D^2 +
      \lambda I)^{-2} V^\top \bstar +
     o_P(1).
  \end{align*}
  The squared-bias of the partially compressed estimator is:
  \begin{align*}
    \bias^2\left(\hat\b_{PC} \given \X, Q\right)
    &=\lambda^2 \bstar^\top V(D^2 + \lambda I)^{-2} V^\top \bstar + 
      2\bstar^\top(I-M\X) M
      (A-I) H\X\bstar + o_P(1)\\
    \bias^2\left(\hat\b_{PC}\given \X\right) 
    &= \lambda^2 \bstar^\top V(D^2 + \lambda I)^{-2} V^\top \bstar +
      o_P(1).
  \end{align*}
\end{theorem}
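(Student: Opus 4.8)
\section*{Proof proposal for \autoref{thm:bias2taylor}}

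The two squared-bias identities for each estimator differ only in whether we also average over the compression $Q$, so the plan is to compute the conditional-on-$(\X,Q)$ bias once and read off the unconditional statement by averaging. For the conditional statement I would first take the expectation over $Y$ alone, which is \emph{exact} because each estimator is linear in $Y$ and $\E[Y\given\X]=\X\bstar$. Writing $A=\frac{s}{q}Q^\top Q$ so that $\E A=I_n$, this gives
\begin{align*}
  \E[\hat\b_{FC}(\lambda)\given\X,Q] &= (\X^\top A\X+\lambda I)^{-1}\X^\top A\X\,\bstar, \\
  \E[\hat\b_{PC}(\lambda)\given\X,Q] &= (\X^\top A\X+\lambda I)^{-1}\X^\top\X\,\bstar,
\end{align*}
so in both cases the whole problem reduces to a first-order Taylor expansion of the resolvent $(\X^\top A\X+\lambda I)^{-1}$ about $A=I_n$.

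For that expansion I would set $\Delta:=A-I_n$ and use the Neumann identity
\[
(\X^\top A\X+\lambda I)^{-1} = (\X^\top\X+\lambda I)^{-1} - M\Delta M^\top + O_P(\norm{\Delta}^2),
\]
where $M=(\X^\top\X+\lambda I)^{-1}\X^\top$ and its transpose is $M^\top=\X(\X^\top\X+\lambda I)^{-1}$. Multiplying through by $\X^\top A\X=\X^\top\X+\X^\top\Delta\X$ (full compression) or by $\X^\top\X$ (partial compression) and collecting by order in $\Delta$, the zeroth-order term is $M\X\,\bstar$ in both cases, so the leading bias is $(M\X-I)\bstar$. Using $M^\top\X^\top\X=H\X$ and $\X-H\X=(I-H)\X$, the first-order correction simplifies to $M\Delta(I-H)\X\,\bstar$ for full compression and to $-M\Delta H\X\,\bstar$ for partial compression; these are exactly the matrices appearing inside the stated cross terms.

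It then remains to square the bias. I would expand $\norm{\cdot}_2^2$ and keep the zeroth-order square together with the single zeroth--first cross term. The zeroth-order square is $\bstar^\top(M\X-I)^\top(M\X-I)\bstar$; since $M\X-I=-\lambda(\X^\top\X+\lambda I)^{-1}$ is symmetric, inserting the SVD $\X=UDV^\top$ turns this into $\lambda^2\bstar^\top V(D^2+\lambda I)^{-2}V^\top\bstar$, reproducing the ridge squared bias of the corollary above. The same symmetry of $M\X-I$ lets me write the cross term as $2\bstar^\top(M\X-I)M(A-I)(I-H)\X\,\bstar$ (full) and $2\bstar^\top(I-M\X)M(A-I)H\X\,\bstar$ (partial), matching the theorem. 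For the unconditional statements I would average over $Q$: since $\E\Delta=\E[A-I_n]=0$, the linear cross term vanishes and only the ridge squared bias survives; equivalently, one squares the expression for $\E[\hat\b_{FC}(\lambda)\given\X]$ already supplied by \autoref{thm:coefFC} and \autoref{thm:coefPC}.

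The hard part will be the remainder. Collecting the pure first-order square and all higher Taylor terms into $o_P(1)$ is delicate because $\norm{A-I_n}=O_P(\sqrt{n/q})$ is only of constant order under the working assumption $q=cn$, so the quadratic-in-$\Delta$ contribution is a priori the same stochastic order as the linear cross term we retain. Controlling it rigorously requires the expansion convention and the norm bound discussed in the remark following \autoref{thm:coefPC}, and, for anything beyond a formal first-order statement, a concentration argument showing the quadratic forms in $\Delta$ are genuinely lower order. The algebraic bookkeeping in the preceding steps, by contrast, is routine resolvent manipulation.
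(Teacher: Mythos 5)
Your proposal is correct and follows essentially the same route as the paper: both arguments take the exact expectation over $Y$ first and then perform a first-order Taylor expansion in $A$ about $I_n$, with the zeroth-order term giving the ridge squared bias, the linear cross term surviving conditionally on $Q$, and that term vanishing unconditionally because $\E[A]=I_n$. The only difference is mechanical---you extract the first-order term from a Neumann expansion of the resolvent and then square the bias vector, whereas the paper differentiates the squared bias directly via vec/Kronecker matrix calculus---and your closing caveat that the remainder is only $O_P(\sqrt{n/q})$ under $q=cn$ is a fair concern that the paper's own treatment shares rather than resolves.
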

Note that, ignoring
the remainder, the squared bias of both fully compressed and partially
compressed estimators when averaged over the compression is the same as that of ridge regression.

\begin{theorem}
  \label{thm:trace-variance-taylor}
  The variance of the fully compressed estimator is:
  \begin{align*}
    \tr\left(\V\left[\hat\b_{FC} \given \X, Q\right]\right) 
    &= \sigma^2 \sum_{j=1}^p \frac{d_j^2} {(d_j^2 + \lambda)^2}+ 2
      \tr\left(M(I_n-H)(A-I_n)M^\top\right)+ o_P(1)\\
    \tr\left(\V\left[\hat\b_{FC} \given \X\right]\right) 
    &= \sigma^2 \sum_{j=1}^p \frac{d_j^2} {(d_j^2 + \lambda)^2} +
      o_P(1)\\
   &\quad+     \frac{\lambda^2(s-2)_+}{q}\bstar^\top V D^2(D^2+\lambda
                             I)^{-4}D^2V^\top \bstar\\
  &\quad + \frac{\lambda^2}{q}\bstar^\top VD(D^2+\lambda I)^{-2}DV^\top\bstar\sum_{j=1}^p \frac{d_j^2} {(d_j^2 + \lambda)^2}.
  \end{align*}
  The variance of the partially compressed estimator is:
  \begin{align*}
  \tr\left(\V\left[\hat\b_{PC} \given \X, Q\right]\right) 
    &= \sigma^2 \sum_{j=1}^p \frac{d_j^2} {(d_j^2 + \lambda)^2} - 2
      \tr\left(MH(I_n-A)M^\top\right)+ o_P(1)\\
  \tr\left(\V\left[\hat\b_{PC} \given \X\right]\right) 
    &= \sigma^2 \sum_{j=1}^p \frac{d_j^2} {(d_j^2 + \lambda)^2} +
      o_P(1)\\
  &\quad +         \frac{(s-2)_+}{q}\bstar^\top VD^2(D^2+\lambda I)^{-2}D^2 V^\top\bstar\\
  & \quad+ \frac{1}{q}\bstar^\top VD^3(D^2+\lambda I)^{-2}D^3 V^\top\bstar\sum_{j=1}^p \frac{d_j^2} {(d_j^2 + \lambda)^2}.
  \end{align*}

\end{theorem}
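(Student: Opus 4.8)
The plan is to treat the two conditionings separately and to reuse the matrix-valued variance expansions already established. For the statements conditional on both $\X$ and $Q$, the key observation is that once $Q$ is held fixed each estimator is linear in $\Y$: with $A := Q^\top Q$, $\hat\b_{FC}=G_{FC}(A)\Y$ and $\hat\b_{PC}=G_{PC}(A)\Y$ where $G_{FC}(A)=(\X^\top A\X+\lambda I_p)^{-1}\X^\top A$ and $G_{PC}(A)=(\X^\top A\X+\lambda I_p)^{-1}\X^\top$. Hence $\V[\hat\b_{FC}\given\X,Q]=\sigma^2 G_{FC}(A)G_{FC}(A)^\top$ exactly, and similarly for PC, so taking the trace reduces everything to expanding the squared Frobenius norm $\tr(G(A)G(A)^\top)$ about $A=I_n$. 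Writing $\Delta:=A-I_n$ and differentiating the inverse through $dF^{-1}=-F^{-1}(dF)F^{-1}$, I would obtain $G_{FC}(A)=M+M\Delta(I_n-H)+O(\Delta^2)$ and $G_{PC}(A)=M-M\Delta H+O(\Delta^2)$. Substituting, using the cyclicity of the trace together with the symmetry of $\Delta$ and of $I_n-H$, and recognizing $\tr(MM^\top)=\sum_j d_j^2/(d_j^2+\lambda)^2$, reproduces the leading ridge term and the stated first-order corrections.

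For the statements conditional on $\X$ alone, I would simply take the trace of the full variance matrices already proved in \autoref{thm:coefFC} and \autoref{thm:coefPC} and simplify with the SVD $\X=UDV^\top$. The $\sigma^2$-proportional pieces collapse immediately through $\tr(MM^\top)$ and $\tr((I_n-H)^2)$, while the $\bstar$-dependent pieces become quadratic forms in $\bstar$ that diagonalize using the identities $(I_n-H)\X=\lambda UD(D^2+\lambda I)^{-1}V^\top$, $M(I_n-H)\X=\lambda VD^2(D^2+\lambda I)^{-2}V^\top$, and $MH\X=VD^4(D^2+\lambda I)^{-2}V^\top$. For instance, the FC term $\tfrac{(s-2)_+}{q}\tr\big(M(I_n-H)\X\bstar\bstar^\top\X^\top(I_n-H)M^\top\big)=\tfrac{(s-2)_+}{q}\norm{M(I_n-H)\X\bstar}^2$ collapses to $\tfrac{\lambda^2(s-2)_+}{q}\bstar^\top VD^2(D^2+\lambda I)^{-4}D^2V^\top\bstar$, and the remaining $1/q$ term factors as $\bstar^\top\X^\top(I_n-H)^2\X\bstar$ times $\tr(MM^\top)$, giving the second displayed correction; the PC case is analogous with $H$ in place of $I_n-H$. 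A useful organizing check is the law of total variance $\V[\hat\b\given\X]=\E_Q\V[\hat\b\given\X,Q]+\V_Q\,\E[\hat\b\given\X,Q]$: the second summand generates exactly the $\bstar$-dependent terms (through $\E[\Delta B\Delta]$ with $B$ built from $(I_n-H)\X\bstar$ or $H\X\bstar$), while the first reproduces the ridge variance together with the $\sigma^2$-proportional corrections.

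The hard part will be the remainder control, that is, justifying the $o_P(1)$ terms. The naive bound of higher-order terms by powers of $\norm{\Delta}$ fails here: because $A=Q^\top Q$ has rank at most $q\leq n$, it has a null space and $\norm{\Delta}=\norm{A-I_n}\geq 1$ does not vanish, so in operator norm $A$ is genuinely far from $I_n$. The expansions are valid only because every surviving term is contracted through $M$ and $H$, which project onto the $p$-dimensional column space of $\X$; on that low-dimensional subspace the fluctuations of $\Delta$ are $O_P(1/\sqrt q)$ rather than $O_P(1)$. Making this rigorous is where I would concentrate: (i) show that the linear functionals $\langle\Delta,C\rangle$ for fixed low-rank $C$ concentrate at rate $1/\sqrt q$, and (ii) evaluate the expected quadratic forms $\E[\Delta B\Delta]$ for fixed $B$ using the second and fourth moments of the sparse-Bernoulli $Q$---this is precisely where the $(s-2)_+/q$ and $1/q$ coefficients, and the diagonal (kurtosis) contribution that produces the $(s-2)_+$ terms, come from. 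The Tracy--Widom-type operator-norm estimate cited in the remark following \autoref{thm:coefPC} then supplies the crude bound needed to discard the genuinely higher-order remainders once these leading contractions have been isolated.
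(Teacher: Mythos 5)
Your proposal is correct and follows essentially the same route as the paper: a first-order Taylor expansion about $A=I_n$ of the (exactly linear-in-$Y$) conditional variance for the conditional statements, the law of total variance combined with the second- and fourth-moment structure of the sparse-Bernoulli $Q$ (which is exactly where the $(s-2)_+/q$ and $1/q$ coefficients arise) for the unconditional statements, and the same SVD identities $M(I_n-H)\X=\lambda VD^2(D^2+\lambda I)^{-2}V^\top$ and $MH\X=VD^4(D^2+\lambda I)^{-2}V^\top$ for the final simplification. Your treatment of the remainder is, if anything, more careful than the paper's, which invokes only the bound $\norm{A-I_n}=O_P(\sqrt{n/q})$ without addressing the rank-deficiency obstruction you correctly identify.
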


\begin{corollary}
\label{cor:lambda-star}
  Suppose $\X$ is such that $\X^\top \X = nI_p$,
  $b^2 := \norm{\bstar}_2^2$, and $\theta:= \lambda/n$. Then
    \begin{align*}
    \textrm{MSE}(\hat\beta_{ridge})
    &= b^2\left(\frac{\theta}{1+\theta}\right)^2 +
      \frac{p\sigma^2}{n(1+\theta)^2}\\ 
    \textrm{MSE}(\hat\beta_{FC})
    &= b^2\left(\frac{\theta}{1+\theta}\right)^2 +
      \frac{p\sigma^2}{n(1+\theta)^2}+
      \frac{b^2p\theta^2(s-2)_+}{q(1+\theta)^4} +
      \frac{p^2\theta^2b^2}{q(1+\theta)^4}\\ 
    \textrm{MSE}(\hat\beta_{PC})
    &= b^2\left(\frac{\theta}{1+\theta}\right)^2 +
      \frac{p\sigma^2}{n(1+\theta)^2}+
      \frac{p(s-2)_+b^2}{q(1+\theta)^2} + \frac{pb^2}{q(1+\theta)^4}.
  \end{align*}
\end{corollary}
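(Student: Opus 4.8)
The plan is to obtain each $\textrm{MSE}$ as squared bias plus trace of variance, reading the three pairs of ingredients off results already in hand: for $\hat\beta_{ridge}$ from the standard ridge corollary, and for $\hat\beta_{FC}$ and $\hat\beta_{PC}$ from the unconditional-on-$Q$ (conditional on $\X$) expressions in \autoref{thm:bias2taylor} and \autoref{thm:trace-variance-taylor}. Since all three MSEs share the same two ridge terms, the entire content of the corollary is the algebraic evaluation of these general formulas in the special case $\X^\top\X = nI_p$, followed by discarding the $o_P(1)$ remainders to expose the oracle MSE.

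The engine of the simplification is that $\X^\top\X = nI_p$ forces $D^2 = nI_p$, so every singular value satisfies $d_j^2 = n$ while $V$ stays an arbitrary orthogonal matrix. Hence every matrix of the form $V f(D^2) V^\top$ in the formulas collapses to $f(n) I_p$, and two reductions do all the work: a spectral sum $\sum_j g(d_j^2)$ becomes $p\,g(n)$, so that $\sum_j d_j^2/(d_j^2+\lambda)^2 = pn/(n+\lambda)^2$; and a quadratic form $\bstar^\top V f(D^2) V^\top \bstar$ becomes $f(n)\,\norm{\bstar}_2^2 = f(n) b^2$ using $VV^\top = I_p$. After substituting I would set $\theta = \lambda/n$, so that $\lambda = n\theta$ and $d_j^2 + \lambda = n(1+\theta)$, and factor out the powers of $n$. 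This instantly turns the ridge squared bias into $b^2(\theta/(1+\theta))^2$ and the ridge variance into $p\sigma^2/(n(1+\theta)^2)$, and the four compression-correction terms are handled identically, each reducing to a monomial of the form $(\textrm{const})\,p^a b^2 \theta^{\,e}/(q(1+\theta)^c)$ whose exponents I simply read off from the powers of $D$ and of $(D^2+\lambda I)$ inside.

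I expect the bookkeeping of the powers of $p$ to be the one genuinely delicate point rather than a triviality. Each correction term carries a $1/q$, possibly a spectral sum contributing a factor $p$, and a quadratic form in $\bstar$; the pointwise evaluation $\bstar^\top A \bstar = f(n) b^2$ does not itself generate a factor of $p$, so reproducing the stated $p$- and $p^2$-prefactors requires interpreting the $\bstar$-quadratic forms in expectation over an isotropic coefficient vector, $\bstar^\top A\bstar \mapsto b^2\,\tr(A) = b^2\sum_j(\textrm{eigenvalues of }A)$, which converts each quadratic form into an additional spectral sum and is precisely the source of the extra $p$'s. Keeping this convention consistent across all terms (and consistent with the $p$-free ridge bias line) is the crux of the calculation; modulo that choice, the remaining steps are routine, and I would close by checking that fixing the spectrum of $\X$ does not disturb the Taylor-expansion hypotheses of the preceding remark, so that the $q = cn$ scaling still makes the dropped remainders $o_P(1)$ and the displayed identity is understood asymptotically in $n$ with $q/n = c$ fixed.
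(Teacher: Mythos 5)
Your proposal follows the paper's proof exactly: both substitute $D^2=nI_p$ into the unconditional expressions of \autoref{thm:bias2taylor} and \autoref{thm:trace-variance-taylor}, add squared bias to the trace of the variance, set $\theta=\lambda/n$, and drop the $o_P(1)$ remainders. Your identification of the $p$-bookkeeping as the only delicate step is also exactly right, but your resolution of it needs a correction. The exact pointwise evaluation $\bstar^\top Vf(D^2)V^\top\bstar=f(n)\,b^2$ indeed produces no factor of $p$, and the paper's own proof silently replaces it with $f(n)\,p\,b^2$, i.e.\ the substitution $\bstar\bstar^\top\mapsto b^2I_p$, to arrive at the stated prefactors. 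Your justification of that rule as an isotropic average is off by a factor of $p$: if $\bstar$ is isotropic with $\E\norm{\bstar}_2^2=b^2$ then $\E[\bstar^\top A\bstar]=\tfrac{b^2}{p}\tr(A)$, not $b^2\tr(A)$; the latter is a worst-case bound ($\bstar\bstar^\top\preceq b^2I_p$), not an average. Worse, no single convention reproduces all four correction terms as printed: the trace substitution matches the two FC terms and the first PC term (giving the stated $p$, $p^2$, and $p$ prefactors), but the final PC term $pb^2/(q(1+\theta)^4)$ is what the \emph{exact} evaluation gives (the trace rule would yield $p^2$), and the paper's displayed intermediate step for that term also carries a spurious $\lambda^2$ that matches neither its left-hand side nor the corollary. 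So your plan is the right one and your worry is well founded; just be aware that the discrepancy you noticed is a genuine inconsistency in the statement and its proof, not something a uniformly applied convention on your part will dissolve, and you should state explicitly which substitution for the $\bstar$-quadratic forms you adopt in each term.
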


Hence, for ridge, the optimal $\theta_* = \sigma^2 p/(n b^2)$ and
$\lambda_* = \sigma^2 p/b^2$.  For the other methods, the MSE can be
minimized numerically, but we have, so far, been unable to find an
analytic expression.

\section{Conclusion}
\label{sec:conclusion}

%

In this paper, we propose and explore a broad family of compressed, regularized
linear model estimators created by generalizing
a commonly used approximation procedure which we notate $\hat\beta_{FC}$ (defined in equation \ref{eq:fullCompression}).
We show that $\hat\beta_{FC}$  must indeed perform worse than the least squares solution.  We suggest 
using
$\hat{\b}_\alpha(\lambda)$, defined in equation \eqref{eq:mostGeneral} instead.  As $\hat{\b}_\alpha(\lambda)$
has a tuning parameter $\lambda$, we discuss methods for choosing it in a data 
dependent and computationally feasible way.

We find that, in particular cases, $\hat{\b}_\alpha(\lambda)$ can perform
better than full-data OLS and full-data ridge regression, and that this
statistical performance gain is accompanied by a computational one.  Admittedly, $\hat{\b}_\alpha(\lambda)$ does
not perform as well in our real data examples, however, it appears that these data sets have relatively low 
variance, and hence, including bias does not improve performance as much as if there were larger variance.

Interesting future work would examine the divergence based tuning parameter selection method.
This approach, while promising, seemingly requires the computation of the dense, large matrix $Q^{\top}Q$, and 
hence, is computationally infeasible.  Additionally, other forms for the compression matrix $Q$ should be investigated
to examine their statistical impact. Finally, while ridge penalties
are amenable to theoretical analysis because of their closed form
solution, it is worthwhile to examine other penalty functions and
other losses, such as generalized linear models.







\bibliographystyle{agsm}
\bibliography{../plsRefs}


\clearpage

\appendix

\section{Supplementary material}
\label{sec:suppl-mater-app}

\subsection{Background results}
\label{sec:background-results}

\begin{lemma}
  \[
  \diag(\vecfun{I_n}) = \sum_{j=0}^{n-1} e_{1+j(n+1)}e_{1+j(n+1)}^\top
  \]
  where $e_j$ is an $n^2$ vector with a 1 in the $j^{th}$ position and
  zeros otherwise.
\end{lemma}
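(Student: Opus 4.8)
The plan is to prove this by identifying the \emph{support} of the vector $\vecfun{I_n}$---the set of positions holding its nonzero entries---and then observing that both sides of the claimed identity are nothing more than the $n^2\times n^2$ diagonal matrix supported exactly on that set. There is no analysis here; the statement is a purely combinatorial bookkeeping identity about where the $1$'s land.

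First I would recall that $\vecfun{\cdot}$ stacks the columns of its matrix argument into a single long vector. The $k$th column of $I_n$ has its unique nonzero entry, equal to $1$, in the $k$th position. In the stacked vector $\vecfun{I_n}\in\R^{n^2}$, this $k$th column occupies the contiguous block of positions $(k-1)n+1,\ldots,kn$, so its single $1$ lands at the global index $(k-1)n+k$. (Because $I_n$ is symmetric, this conclusion is the same whether one stacks columns or rows, so the convention is immaterial.) I would then reindex with $j=k-1$, letting $j$ range over $0,\ldots,n-1$, and rewrite the global index via the elementary identity $(k-1)n+k = 1+(k-1)(n+1) = 1+j(n+1)$. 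Hence the nonzero entries of $\vecfun{I_n}$ are precisely those in positions $1+j(n+1)$ for $j=0,\ldots,n-1$, each equal to $1$.

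With the support pinned down, the two sides match immediately. Since $\diag$ places a vector along the main diagonal, $\diag(\vecfun{I_n})$ is the $n^2\times n^2$ diagonal matrix carrying a $1$ at each diagonal position $1+j(n+1)$ and $0$ everywhere else. On the other side, each rank-one outer product $e_{1+j(n+1)}e_{1+j(n+1)}^\top$ is the matrix whose only nonzero entry is a $1$ in the $(1+j(n+1),\,1+j(n+1))$ slot, and summing over $j=0,\ldots,n-1$ produces exactly the diagonal matrix with $1$'s in those same positions. Comparing the two expressions entry by entry establishes the equality.

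The only point requiring care---and the closest thing to an ``obstacle''---is the index arithmetic: correctly tracking that the $1$ in the $k$th stacked column sits at global position $(k-1)n+k$ and verifying $(k-1)n+k = 1+(k-1)(n+1)$. Once that arithmetic is confirmed, the identity follows with no further work, so I would keep the written proof to a sentence or two rather than belaboring the column-counting.
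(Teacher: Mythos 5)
Your proof is correct: the index computation $(k-1)n+k = 1+(k-1)(n+1)$ pins down the support of $\vecfun{I_n}$, and both sides are then visibly the same diagonal matrix. The paper states this lemma without proof, and your argument is exactly the direct verification the authors evidently had in mind, so there is nothing to add.
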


\begin{lemma}
  \[
  (B^\top \otimes A) \diag(\vecfun{I_n}) (B \otimes A^\top)  = \sum_{j=1}^{n}\vecfun{A E_{jj}B} \vecfun{AE_{jj}B}^\top
  \]
where $E_{jj}$ is the $n\times n$ matrix with a 1 in the $jj$ entry
and zeros otherwise.
\end{lemma}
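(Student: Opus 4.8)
The plan is to expand the left-hand side using the previous lemma and then recognize each standard basis vector appearing there as the vectorization of one of the coordinate matrices $E_{jj}$. Substituting $\diag(\vecfun{I_n}) = \sum_{j=0}^{n-1} e_{1+j(n+1)} e_{1+j(n+1)}^\top$ and distributing the Kronecker factors across the sum, while using $(B \otimes A^\top) = (B^\top \otimes A)^\top$, gives
\[
(B^\top \otimes A)\diag(\vecfun{I_n})(B \otimes A^\top) = \sum_{j=0}^{n-1} \big[(B^\top \otimes A) e_{1+j(n+1)}\big]\big[(B^\top \otimes A) e_{1+j(n+1)}\big]^\top .
\]
So the entire problem reduces to identifying the vector $(B^\top \otimes A)e_{1+j(n+1)}$.

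The key observation is that the basis vectors $e_{1+j(n+1)} \in \R^{n^2}$, for $j=0,\ldots,n-1$, are precisely the vectorizations of the diagonal coordinate matrices. Indeed, since $\vecfun{E_{kk}}$ stacks the columns of $E_{kk}=e_k e_k^\top$, its single nonzero entry lands in position $(k-1)n + k = 1 + (k-1)(n+1)$. Under the shift $k = j+1$ this matches exactly the positions enumerated in the previous lemma, so $e_{1+j(n+1)} = \vecfun{E_{kk}}$ and the sum over $j=0,\ldots,n-1$ becomes a sum over $k=1,\ldots,n$ of vectorized coordinate matrices.

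With this identification, I would apply the standard vec--Kronecker identity $\vecfun{ACB} = (B^\top \otimes A)\vecfun{C}$ with $C = E_{kk}$ to obtain $(B^\top \otimes A)\vecfun{E_{kk}} = \vecfun{A E_{kk} B}$. Substituting into the expansion above and relabeling $k$ as $j$ yields
\[
\sum_{j=1}^{n} \vecfun{A E_{jj} B}\,\vecfun{A E_{jj} B}^\top ,
\]
which is the claimed right-hand side.

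The calculation is essentially bookkeeping, so the only point requiring care is the index arithmetic in the middle step: verifying that $1 + (k-1)(n+1)$ is the true location of the nonzero entry of $\vecfun{E_{kk}}$ and that this coincides with the indexing $1 + j(n+1)$ of the previous lemma under $k=j+1$. Everything else follows from distributing the Kronecker product over the rank-one sum and invoking the standard vectorization identity.
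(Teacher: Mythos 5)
Your proof is correct and follows the same route as the paper's (very terse) argument: substitute the diagonal decomposition from the preceding lemma, identify $e_{1+j(n+1)}$ as $\vecfun{E_{j+1,j+1}}$, and apply the identity $(B^\top\otimes A)\vecfun{C}=\vecfun{ACB}$. The index check $1+(k-1)(n+1)=(k-1)n+k$ is right, so nothing is missing; you have simply supplied the bookkeeping the paper omits.
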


\begin{proof}
  \begin{align}
    &(B^\top \otimes A) \diag(\vecfun{I_n}) (B \otimes A^\top) 
    = \sum_{j=0}^{n-1}(B^\top \otimes A) e_{1+j(n+1)}e_{1+j(n+1)}^\top
      (B \otimes A^\top)\\
    &= \sum_{j=1}^{n}\vecfun{A E_{jj}B} \vecfun{AE_{jj}B}^\top.
  \end{align}
\end{proof}

\begin{corollary}
  If $b$ is a vector, then 
  \[
  (b^\top \otimes A) \diag(\vecfun{I_n}) (b \otimes A^\top)  =
  Abb^\top A^\top.
  \]
\end{corollary}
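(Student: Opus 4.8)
The plan is to derive this corollary as a one-line specialization of the preceding Lemma, so that no new machinery is needed. Taking $B=b$ there, with $b$ an $n$-vector, the Lemma immediately gives $(b^\top\otimes A)\diag(\vecfun{I_n})(b\otimes A^\top)=\sum_{j=1}^n\vecfun{AE_{jj}b}\vecfun{AE_{jj}b}^\top$, reducing the entire problem to evaluating this finite sum and collapsing it into a factored matrix product.

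First I would simplify a single summand using $E_{jj}=e_je_j^\top$. Then $E_{jj}b=(e_j^\top b)e_j=b_je_j$, so $AE_{jj}b=b_jAe_j=b_ja_j$, where $a_j$ denotes the $j$-th column of $A$. Because $AE_{jj}b$ is itself a column vector, $\vecfun{AE_{jj}b}=b_ja_j$, and each rank-one term is $\vecfun{AE_{jj}b}\vecfun{AE_{jj}b}^\top=b_j^2\,a_ja_j^\top$. Summing over $j$ then yields $\sum_{j=1}^n b_j^2\,a_ja_j^\top$, which I would rewrite in factored form as $A\,\diag(b)^2\,A^\top$.

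The main obstacle is reconciling this outcome with the stated right-hand side. The specialization produces the diagonally weighted sum $A\,\diag(b)^2\,A^\top=\sum_j b_j^2\,a_ja_j^\top$, whereas $Abb^\top A^\top=\sum_{j,k}b_jb_k\,a_ja_k^\top$ additionally carries every off-diagonal cross term $b_jb_k\,a_ja_k^\top$ with $j\neq k$. The two coincide only when those cross terms drop out (for instance when $b$ has at most one nonzero coordinate), so the equality does not follow for general $b$ exactly as written. The discrepancy is entirely attributable to the middle factor: $\diag(\vecfun{I_n})$ is the diagonal projection retaining only the coordinate pairs $(k,k)$, which forces the diagonal form, while a rank-one factor $\vecfun{I_n}\vecfun{I_n}^\top$ would instead deliver $Abb^\top A^\top$ via the vec identity $(b^\top\otimes A)\vecfun{I_n}=Ab$. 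Thus the crux of a correct argument is to settle which of these the statement intends; read literally through the Lemma, the honest conclusion of the computation is $A\,\diag(b)^2\,A^\top$.
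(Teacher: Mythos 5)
Your computation is correct, and what you have found is an error in the paper's statement rather than a gap in your argument. The paper gives no separate proof of this corollary---it is meant to follow from the preceding lemma by setting $B=b$---and that is precisely the route you took. Carrying the specialization out honestly, $AE_{jj}b = b_j Ae_j = b_j a_j$ with $a_j$ the $j$th column of $A$, so the lemma yields
\[
(b^\top \otimes A)\, \diag(\vecfun{I_n})\, (b\otimes A^\top) \;=\; \sum_{j=1}^n b_j^2\, a_j a_j^\top \;=\; A\,\diag(b)^2\,A^\top,
\]
which keeps only the diagonal weights and drops every cross term $b_jb_k\,a_ja_k^\top$, $j\neq k$, present in $Abb^\top A^\top$. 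A minimal counterexample confirms the printed identity fails: with $A=I_2$ and $b=(1,1)^\top$, the left side is $I_2$ while $Abb^\top A^\top=bb^\top$ is the all-ones $2\times 2$ matrix; equality holds only when $\diag(b)^2=bb^\top$, i.e.\ when $b$ has at most one nonzero coordinate. Your diagnosis of the source is also exactly right: $\diag(\vecfun{I_n})$ is the coordinate projection onto the positions $(k,k)$, whereas the rank-one middle factor $\vecfun{I_n}\vecfun{I_n}^\top$ would indeed deliver $Abb^\top A^\top$ via $(b^\top\otimes A)\vecfun{I_n}=\vecfun{AI_nb}=Ab$.

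It is worth noting how the error propagates. In the proofs of \autoref{thm:coefFC}, \autoref{thm:coefPC}, and \autoref{thm:trace-variance-taylor}, terms of the form $\frac{(s-3)_+}{q}(z^\top\otimes L)\diag(\vecfun{I_n})(z\otimes L^\top)$ (with $z=\hat e$, $\hat Y$, $(I_n-H)\X\bstar$, or $H\X\bstar$, and $L=M$) are evaluated as $\frac{(s-3)_+}{q}Lzz^\top L^\top$ and then merged with the commutation-matrix contribution $\frac{1}{q}Lzz^\top L^\top$ to produce the $(s-2)_+$ coefficients; the corrected evaluation replaces the first term by $\frac{(s-3)_+}{q}L\,\diag(z)^2\,L^\top$, so only the $K_{nn}$ term genuinely contributes a rank-one piece $\frac{1}{q}Lzz^\top L^\top$. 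Because the offending coefficient is $(s-3)_+$, the mistake is invisible whenever $s\le 3$---in particular for the $s=3$ sparse Bernoulli matrices used in the paper's simulations---but the displayed variance formulas are incorrect for $s>3$.
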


\begin{lemma}
  \label{lem:ridge-residual-expect}
  Let $\hat{e} = (I_n-H)Y$ for $H=\X(\X^\top\X + \lambda
  I_p)\X^\top=UD(D^2+\lambda I_p)^{-1}DU^\top$. Then,
  \begin{align*}
    \E[\hat{e}\hat{e}^\top] 
    &= (I_n-H)(\X\bstar\bstar^\top\X^\top+\sigma^2I_n)(I_n-H)\\
    \E[\hat{e}^\top\hat{e}]
    &= \sigma^2\tr((I_n-H)^2)+\bstar^\top \X^\top(I_n-H)^2\X\bstar.
  \end{align*}

\end{lemma}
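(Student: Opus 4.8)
The plan is to treat $\hat e$ as a \emph{deterministic} linear image of the random vector $Y$ and then apply the textbook mean/variance decomposition of a second moment. Under the model in \eqref{eq:linearModel} we have $\E Y = \X\bstar$ and $\V Y = \sigma^2 I_n$, while $I_n - H$ does not depend on $Y$ (it is a fixed function of $\X$ and $\lambda$). I would record at the outset that $H = UD(D^2+\lambda I_p)^{-1}DU^\top$ is symmetric, so $(I_n-H)^\top = I_n - H$; this lets me drop every transpose on the hat-matrix factors, which is the only bookkeeping point worth flagging.

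First I would compute the first moment and covariance of $\hat e = (I_n-H)Y$ directly by linearity: $\E\hat e = (I_n-H)\X\bstar$ and $\V\hat e = (I_n-H)(\V Y)(I_n-H)^\top = \sigma^2(I_n-H)^2$, invoking symmetry of $I_n-H$. Then I would assemble the outer product via $\E[\hat e\hat e^\top] = \V\hat e + (\E\hat e)(\E\hat e)^\top$, giving
\[
\E[\hat e\hat e^\top] = \sigma^2(I_n-H)^2 + (I_n-H)\X\bstar\bstar^\top\X^\top(I_n-H).
\]
Factoring the common $(I_n-H)$ on each side yields the claimed form $(I_n-H)(\X\bstar\bstar^\top\X^\top + \sigma^2 I_n)(I_n-H)$.

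For the scalar identity I would write $\hat e^\top\hat e = \tr(\hat e\hat e^\top)$, so that $\E[\hat e^\top\hat e] = \tr(\E[\hat e\hat e^\top])$ by linearity of trace and expectation. Expanding the two terms from the first identity and applying the cyclic property of trace converts $\tr\big((I_n-H)\X\bstar\bstar^\top\X^\top(I_n-H)\big)$ into $\bstar^\top\X^\top(I_n-H)^2\X\bstar$, leaving exactly $\sigma^2\tr((I_n-H)^2) + \bstar^\top\X^\top(I_n-H)^2\X\bstar$. There is no genuine obstacle: the computation is the standard second-moment decomposition, and the only care needed is the symmetry of $I_n-H$ and being explicit that the expectation is over the noise $\varepsilon$ alone, with $\X$, $\bstar$, and $\lambda$ held fixed. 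This is precisely why the statement appears as a background lemma feeding the conditional-on-$Y$ variance expansions in Theorems~\ref{thm:coefFC} and~\ref{thm:coefPC}.
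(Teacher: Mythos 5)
Your proposal is correct and is exactly the standard second-moment decomposition the paper intends: the lemma is stated as a background result without proof, and your argument (linearity for $\E\hat e$ and $\V\hat e$, then $\E[\hat e\hat e^\top]=\V\hat e+(\E\hat e)(\E\hat e)^\top$ and the trace/cyclic step for the scalar version) is the canonical way to establish it. No gaps; the only caveat worth noting is that the displayed definition of $H$ in the statement omits an inverse (it should read $H=\X(\X^\top\X+\lambda I_p)^{-1}\X^\top$), which your use of the symmetric SVD form handles correctly.
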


\begin{lemma}
  \label{lem:ridge-hat-expect}
  Let $\hat{Y} = HY$. Then,
  \begin{align*}
    \E[\hat{Y}\hat{Y}^\top] 
    &= H(\X\bstar\bstar^\top\X^\top+\sigma^2I_n)H\\
    \E[\hat{Y}^\top\hat{Y}]
    &= \sigma^2\tr(H^2)+\bstar^\top \X^\top H^2\X\bstar.
  \end{align*}

\end{lemma}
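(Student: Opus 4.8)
The plan is to exploit the fact that this lemma is the exact analogue of Lemma~\ref{lem:ridge-residual-expect}, with the hat matrix $H$ playing the role previously played by $I_n-H$; consequently the same second-moment computation applies almost verbatim. The only structural facts I would use are the linear model $Y=\X\bstar+\sigma\varepsilon$ with $\E\varepsilon=0$ and $\E[\varepsilon\varepsilon^\top]=I_n$, together with the symmetry of $H=UD(D^2+\lambda I_p)^{-1}DU^\top$, so that $H^\top=H$ and $HH^\top=H^2$.

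First I would substitute the model into $\hat Y=HY$ to obtain the decomposition $\hat Y = H\X\bstar + \sigma H\varepsilon$, whose mean part is deterministic and whose fluctuation part has mean zero. Forming the outer product
\[
\hat Y\hat Y^\top = H\X\bstar\bstar^\top\X^\top H + \sigma H\X\bstar\varepsilon^\top H + \sigma H\varepsilon\bstar^\top\X^\top H + \sigma^2 H\varepsilon\varepsilon^\top H,
\]
and taking expectation over $\varepsilon$, the two cross terms vanish because $\E\varepsilon=0$, while $\E[\sigma^2 H\varepsilon\varepsilon^\top H]=\sigma^2 H\,\E[\varepsilon\varepsilon^\top]\,H=\sigma^2 H^2$. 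Collecting the surviving pieces and factoring $H$ on the left and right gives $\E[\hat Y\hat Y^\top]=H(\X\bstar\bstar^\top\X^\top+\sigma^2 I_n)H$, which is the first claim.

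For the second identity I would simply observe that $\hat Y^\top\hat Y=\tr(\hat Y\hat Y^\top)$, so that, by linearity of the trace and the result just obtained, $\E[\hat Y^\top\hat Y]=\tr\!\big(H(\X\bstar\bstar^\top\X^\top+\sigma^2 I_n)H\big)$. Splitting this trace and applying the cyclic property turns the signal term into a scalar, $\tr(H\X\bstar\bstar^\top\X^\top H)=\bstar^\top\X^\top H^2\X\bstar$, while the noise term is $\sigma^2\tr(H^2)$, yielding $\E[\hat Y^\top\hat Y]=\sigma^2\tr(H^2)+\bstar^\top\X^\top H^2\X\bstar$.

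There is essentially no genuine obstacle here: the computation is routine once the model is substituted, and the only points requiring minor care are the symmetry of $H$ (which collapses $HH^\top$ to $H^2$) and the correct bookkeeping of the cyclic trace identity. The result is immediate from, and entirely parallel to, the preceding residual lemma, Lemma~\ref{lem:ridge-residual-expect}.
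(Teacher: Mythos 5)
Your proof is correct: the paper states this lemma in its ``Background results'' section without proof, treating it as a standard second-moment computation, and your argument (substitute $Y=\X\bstar+\sigma\varepsilon$, kill the cross terms via $\E\varepsilon=0$, use $\E[\varepsilon\varepsilon^\top]=I_n$ and the symmetry of $H$, then take traces with the cyclic property) is exactly the intended derivation and parallels Lemma~\ref{lem:ridge-residual-expect} as you note. No gaps.
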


\section{Proofs}

\begin{lemma}
  \label{lem:q_ij-distr-indep}
  For $q_{ij}$ distributed independently as sparse Bernoulli random
  variables, we have
  \begin{align}
    \E\left[\frac{s}{q} Q^\top Q\right] 
    &= I_{n}\\
    \V\left[ \vecfun{\frac{s}{q} Q^\top Q}\right] &=\frac{(s-3)_+}{q} \diag(\vecfun{I_n}) +
  \frac{1}{q} I_{n^2}+\frac{1}{q} K_{nn}
  \end{align}
  where $K_{nn}$ is
  the $(nn,nn)$ commutation matrix: $K_{nn}$ is the unique
  matrix such that, for any
  $n\times n$ matrix $A$, $\vecfun{A^\top} = K_{nn} \vecfun{A}$
  \citep[see][]{MagnusNeudecker1979}. 
\end{lemma}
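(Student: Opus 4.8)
The plan is to verify the mean directly and then reduce the variance computation to a fourth-moment calculation of the sparse Bernoulli entries, organized via vectorization and the Kronecker algebra set up in the background lemmas.

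First I would compute the expectation. Writing the renormalized matrix as $A = \frac{s}{q}Q^\top Q$, the $(k,\ell)$ entry is $\frac{s}{q}\sum_{i=1}^q q_{ik}q_{i\ell}$. Since the entries are independent with $\E q_{ik} = 0$ and $\E q_{ik}^2 = 1/s$, the off-diagonal entries ($k\neq\ell$) have expectation zero, while each diagonal entry equals $\frac{s}{q}\sum_i \E q_{ik}^2 = \frac{s}{q}\cdot q\cdot\frac{1}{s} = 1$. This gives $\E A = I_n$, matching the normalization stated at the end of \autoref{sec:Q}.

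For the variance, the strategy is to pass to $\vecfun{A}$ and compute $\E[\vecfun{A}\vecfun{A}^\top] - \vecfun{I_n}\vecfun{I_n}^\top$ entrywise. A generic entry of the outer product involves $\E[a_{k\ell}a_{mn}]$, which unfolds into $\frac{s^2}{q^2}\sum_{i,i'}\E[q_{ik}q_{i\ell}q_{i'm}q_{i'n}]$. Because the rows are independent, only terms with $i=i'$ (for the genuinely fourth-order contribution) or products of two separate second moments survive, and these split into cases according to how the index set $\{k,\ell,m,n\}$ coincides. The three surviving index patterns are exactly what produce the three matrices in the claim: the all-equal case $k=\ell=m=n$ contributes to the diagonal selector $\diag(\vecfun{I_n})$ and carries the fourth moment $\E q_{ik}^4 = 1/s$, which after subtracting the pieces already accounted for by the second moments yields the $(s-3)_+$ coefficient; the pairing $k=m,\ell=n$ gives the identity $I_{n^2}$; and the pairing $k=n,\ell=m$ gives the commutation matrix $K_{nn}$, since that pattern corresponds to transposing one copy of $A$. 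The $1/q$ scaling in each term arises from the $q$ independent rows against the $s^2/q^2$ prefactor together with the per-entry second moments.

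The main obstacle will be the careful bookkeeping of the $(s-3)_+$ coefficient: one must correctly separate the pure fourth-moment term $\E q^4 = 1/s$ on the diagonal from the overlapping contributions that the $I_{n^2}$ and $K_{nn}$ terms \emph{also} place on the diagonal (when $k=\ell=m=n$, all three patterns coincide), and then subtract the squared mean $\vecfun{I_n}\vecfun{I_n}^\top$ which only affects diagonal positions. Tracking these overlaps is what converts the raw $1/s$ into $s\cdot(1/s) - 3 = (s-3)$, truncated at zero to reflect that the coefficient cannot be negative for the relevant moment structure. To keep this organized I would invoke the background identity $(B^\top\otimes A)\diag(\vecfun{I_n})(B\otimes A^\top) = \sum_j \vecfun{AE_{jj}B}\vecfun{AE_{jj}B}^\top$, which packages precisely the diagonal-index sum, and the commutation-matrix characterization $\vecfun{A^\top} = K_{nn}\vecfun{A}$, which packages the transpose pairing. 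With these in hand, the remaining work is the routine moment computation rather than any conceptual difficulty.
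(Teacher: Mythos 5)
Your proposal takes essentially the same route as the paper's proof: compute the mean entrywise from $\E[q_{ik}q_{j k}]=\tfrac{1}{s}I(i=j)$, then obtain the variance of $\vecfun{\tfrac{s}{q}Q^\top Q}$ by a case analysis of the mixed fourth moments $\E[q_i^\top q_j\, q_\ell^\top q_m]$ over the index-coincidence patterns, which produce exactly the $\diag(\vecfun{I_n})$, $I_{n^2}$, and $K_{nn}$ terms. One bookkeeping correction: the all-equal-index variance entry is exactly $\tfrac{s}{q}-\tfrac{1}{q}=\tfrac{s-1}{q}$, so after peeling off the $\tfrac{1}{q}+\tfrac{1}{q}$ overlap from $I_{n^2}$ and $K_{nn}$ the diagonal coefficient is $(s-1)-2=s-3$ outright (your ``$s\cdot(1/s)-3$'' is garbled), and the positive-part truncation is not forced by any nonnegativity of the moment structure---it is immaterial for the paper's regime $s\geq 3$ but would actually spoil exactness for $s<3$, where the signed coefficient $s-3<0$ is the correct one.
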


\begin{proof}[Proof of \autoref{lem:q_ij-distr-indep}]
  We have
  \begin{align}
    \E[q_i^\top q_j] &= \sum_{k=1}^q \E[q_{ik}q_{jk}] = \sum_{k=1}^q
                       \frac{1}{s}I(i=j) = \frac{q}{s}.
                       \intertext{Therefore,}
                       \E[Q^\top Q] &= \frac{q}{s} I_{n}.
  \end{align}
  Now, for the variance, note that each element of $\vecfun{Q^\top
    Q}$ is equal to $q_i^\top q_j$ for appropriate $i,\ j \in
  \{1,\ldots,n\}$. Therefore, the $(ij,kl)$-element of the $(n^2\times
  n^2)$-variance
  matrix is given by $\E[q_i^\top q_j q_\ell^\top q_m]
  -(\E[Q^\top Q])_{i,j}(\E[Q^\top Q])_{\ell,m}.$ 
  \begin{align}
    \E[q_i^\top q_j q_\ell^\top q_m] 
    &= \E\left[\sum_{k=1}^qq_{ik}q_{jk}\sum_{s=1}^q q_{\ell s} q_{m
      s}\right]\\ 
    &= \E\left[(q_{i1}q_{j1}+\cdots+q_{iq}q_{jq})(q_{\ell
      1}q_{m1}+\cdots+q_{\ell q}q_{mq})\right]\\
    &= \begin{cases}
      \frac{q}{s}+\frac{q^2-q}{s^2} & i=j=\ell=m\\
      \frac{q^2}{s^2} & i=j,\ \ell=m,\ i\neq \ell\\ 
      \frac{q}{s^2} & i=\ell,\ j=m,\ i\neq j\\
      \frac{q}{s^2} & i=m,\ j=\ell,\ i\neq j\\
      0 & \textrm{else}
    \end{cases}\\
    \intertext{and}
    (\E[Q^\top Q])_{i,j}(\E[Q^\top Q])_{\ell,m} &=\E[q_i^\top q_j]\E[q_\ell^\top q_m] \\
    & = \E\left[\sum_{k=1}^qq_{ik}q_{jk}\right]\E\left[\sum_{s=1}^q q_{\ell s} q_{m
      s}\right] = \frac{q^2}{s^2} I(i=j,\ \ell=m)\\
    \intertext{so,} 
    \V\left[ \vecfun{Q^\top Q}\right] &= \begin{cases}
      \frac{q}{s}-\frac{q}{s^2} & i=j=\ell=m\\
      \frac{q}{s^2} & i=\ell,\ j=m,\ i\neq j\\
      \frac{q}{s^2} & i=m,\ j=\ell,\ i\neq j\\
      0 & \textrm{else}.
    \end{cases}
  \end{align}
  Properties of the commutation matrix give the result.
\end{proof}

\begin{lemma}
\label{lem:coef-taylor-exp}
  For $A$ fixed, the first order Taylor expansion of the fully and
  partially compressed regression estimators are
  \begin{align}
  \hat\b_{FC}(\lambda) &= \hat\b_{ridge}(\lambda) + M(A-I_n) \hat{e} +
                         R_A, \textrm{ and}\\
 \hat\b_{PC}(\lambda) &= \hat\b_{ridge}(\lambda) - M(A-I_n)\X\hat\b_{ridge} + R_A,    
  \end{align}
where $\hat{e} = (I_n - H)Y$.
\end{lemma}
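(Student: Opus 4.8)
The plan is to regard each compressed estimator as a matrix-valued function of $A:=\frac{s}{q}Q^\top Q$ and to expand it to first order about $A=I_n$, the mean of $A$. Because $Q$ is renormalized so that this $A$ is exactly the matrix $Q^\top Q$ entering \eqref{eq:wFCridgeEst} and \eqref{eq:wPCridgeEst}, the two estimators read $\hat\b_{FC}(\lambda)=(\X^\top A\X+\lambda I_p)^{-1}\X^\top A Y$ and $\hat\b_{PC}(\lambda)=(\X^\top A\X+\lambda I_p)^{-1}\X^\top Y$, and both collapse to $\hat\b_{ridge}(\lambda)=MY$ at $A=I_n$. Writing $G_0:=\X^\top\X+\lambda I_p$ so that $M=G_0^{-1}\X^\top$ and $H=\X M$, and setting $E:=A-I_n$, the entire $A$-dependence enters through the resolvent $(G_0+\X^\top E\X)^{-1}$ together with, in the fully compressed case, the extra factor $\X^\top A Y=\X^\top Y+\X^\top E Y$.

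The one analytic ingredient I need is the first-order expansion of the matrix-inverse map, namely
\[
(G_0+\X^\top E\X)^{-1}=G_0^{-1}-G_0^{-1}\X^\top E\X\,G_0^{-1}+R_A,
\]
where the Fr\'echet derivative supplies the linear term and everything quadratic or higher in $E$ is absorbed into $R_A$. For partial compression I would substitute this into $\hat\b_{PC}(\lambda)$ and multiply by the $A$-free factor $\X^\top Y$, giving $\hat\b_{ridge}-G_0^{-1}\X^\top E(\X M Y)+R_A$; identifying $G_0^{-1}\X^\top=M$ and $\X M Y=\X\hat\b_{ridge}$ then yields $\hat\b_{PC}(\lambda)=\hat\b_{ridge}-M(A-I_n)\X\hat\b_{ridge}+R_A$, as claimed. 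For full compression I would keep both first-order pieces: the resolvent contributes $-ME(\X M Y)=-MEHY$, while expanding $\X^\top A Y$ contributes $+MEY$. Adding these gives $\hat\b_{ridge}+ME(I_n-H)Y+R_A$, and recognizing $\hat e=(I_n-H)Y$ produces $\hat\b_{FC}(\lambda)=\hat\b_{ridge}+M(A-I_n)\hat e+R_A$.

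Both stated identities then follow by substituting back $E=A-I_n$. There is no genuine obstacle here: the argument is a routine perturbation expansion, and the only place to be careful is the full-compression case, where the two first-order contributions---the cross term $-MEHY$ from the inverse and the term $+MEY$ from the numerator---must be combined so that they telescope into $ME(I_n-H)Y$. Because the lemma is stated for $A$ fixed, I would not attempt any probabilistic control of $R_A$ at this stage; the stochastic order $\norm{A-I_n}=O_P(\sqrt{n/q})$ that licenses discarding $R_A$ in \autoref{thm:coefFC} and \autoref{thm:coefPC} is established separately in the remark that follows them.
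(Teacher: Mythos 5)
Your expansion is correct and reproduces both identities. The route is genuinely different from the paper's, though, in its bookkeeping: you perturb the resolvent directly at the matrix level, writing $(G_0+\X^\top E\X)^{-1}=G_0^{-1}-G_0^{-1}\X^\top E\X\,G_0^{-1}+R_A$ and multiplying through, whereas the paper computes the Jacobian $\partial\vecfun{\hat\b(A)}/\partial\vecfun{A}$ in Kronecker-product form using the identities $\partial\vecfun{AFB}/\partial\vecfun{F}=B^\top\otimes A$ and $\partial\vecfun{F^{-1}}/\partial\vecfun{F}=-(F^{-1})^\top\otimes F^{-1}$, obtaining the linear term as $(\hat e^\top\otimes M)\vecfun{A-I_n}$ (respectively $-(\hat Y^\top\X^{\top\dagger}\cdots)$ for PC) before collapsing it back to matrix form via $(B^\top\otimes A)\vecfun{X}=\vecfun{AXB}$. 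The two are mathematically the same first-order Taylor expansion, and your telescoping of $+MEY$ and $-MEHY$ into $ME(I_n-H)Y$ is exactly right. What your version buys is transparency: the lemma's statement drops out in three lines with no vec machinery. What the paper's version buys is reusability: the Kronecker form of the derivative, $(\hat e^\top\otimes M)$, is precisely the object that gets sandwiched around $\V[\vecfun{\tfrac{s}{q}Q^\top Q}]$ in the proofs of \autoref{thm:coefFC} and \autoref{thm:coefPC}, so if you adopted your route you would need to convert back to Kronecker form there anyway. Your decision to leave $R_A$ uncontrolled here and defer its stochastic order to the remark matches the paper's treatment.
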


\begin{proof}[Proof of \autoref{lem:coef-taylor-exp}]
We begin with the fully compressed estimator.
\begin{align}
  \hat\b_{FC}(A) &= \hat\b_{FC}(I) + \frac{\partial
  \hat\b_{FC}(A)}{\partial \vecfun{A}}\Bigg|_{A=I} \vecfun{A-I} + R_A\\
                 &= \hat\b_{ridge} + \frac{\partial
                   (\X^\top A\X + \lambda I)^{-1}X^\top A y}{\partial \vecfun{A}}\Bigg|_{A=I} \vecfun{A-I} + R_A,
\end{align}
where $\vecfun{A} = \begin{bmatrix} a_{11} & a_{21} & \cdots &
  a_{nn}\end{bmatrix}^\top$.
Then,
\begin{align}
  \frac{\partial (\X^\top A\X + \lambda I)^{-1}X^\top A y}{\partial
  \vecfun{A}} &= \frac{\partial \vecfun{
                   (\X^\top A\X + \lambda I)^{-1}X^\top A y}}{\partial
             \vecfun{A}}\\
  &= \left[ \one \otimes (\X^\top A\X + \lambda
    I)^{-1} \right] \frac{\partial \vecfun{\X^\top A y}}{\partial
    \vecfun{A}} + \\
  &\quad +\left[y^\top A \X \otimes I\right] \frac{\partial \vecfun{(\X^\top
    A\X + \lambda I)^{-1}}} {\partial \vecfun{A}}.
\end{align}
Here we need two results:
\begin{align}
  \frac{\partial \vecfun{AFB}} {\partial \vecfun{F}} 
  &= B^\top \otimes A   \label{eq:harville1} \\
  \frac{\partial \vecfun{F^{-1}}} {\partial \vecfun{F}} 
  &= -(F^{-1})^\top \otimes F^{-1}.   \label{eq:harville2}
\end{align}
See \citep[][(16.6.8) and (16.6.15) respectively]{Harville1997}. Therefore,
\begin{align}
  \left[ \one \otimes (\X^\top A\X + \lambda
    I)^{-1} \right] \frac{\partial \vecfun{\X^\top A y}}{\partial
    \vecfun{A}} &= \left[ \one \otimes (\X^\top A\X + \lambda
    I)^{-1} \right] (y^\top \otimes \X^\top)\\
  &= y^\top \otimes (\X^\top A\X + \lambda I)^{-1} \X^\top,
\end{align}
and
\begin{align}
  \frac{\partial \vecfun{(\X^\top A\X + \lambda I)^{-1}}} {\partial \vecfun{A}} 
             &=   \frac{\partial \vecfun{(\X^\top
    A\X + \lambda I)^{-1}}} { \partial \vecfun{\X^\top
    A\X + \lambda I}} \frac{\partial\vecfun{  \X^\top
    A\X + \lambda I}} {\partial \vecfun{A}}\\
             &= -\left[ (\X^\top A\X + \lambda I)^{-1} \otimes (\X^\top
    A\X + \lambda I)^{-1} \right] \left[\X^\top \otimes \X^\top
               \right]\\
\intertext{so}
            \left[y^\top A \X \otimes I\right]  \frac{\partial
  \vecfun{(\X^\top A\X + \lambda I)^{-1}}} {\partial \vecfun{A}}  
             &= -\left[ y^\top A\X (\X^\top A \X + \lambda I)^{-1}
               \X^\top \right] \otimes \left[(\X^\top A \X + \lambda
               I)^{-1} \X^\top\right].
\end{align}
Therefore,
\begin{align}
  \frac{\partial \vecfun{(\X^\top A\X + \lambda I)^{-1} X^\top Ay}}
  {\partial \vecfun{A}}  
  &= \left[y^\top - y^\top A\X (\X^\top A \X + \lambda I)^{-1}
               \X^\top \right] \otimes \left[(\X^\top A \X + \lambda
               I)^{-1} \X^\top\right] \\
  &= (y-\X\hat\b_{FC})^\top \otimes \left[(\X^\top A \X + \lambda
               I)^{-1} \X^\top\right].
\end{align}
Finally,
\begin{align}
\hat\b_{FC}(A) &= \hat\b_{ridge} + (y-\X\hat\b_{FC})^\top \otimes
                 \left[(\X^\top A \X + \lambda 
                 I)^{-1} \X^\top\right]\Bigg|_{A=I}\vecfun{A-I} +
                 R_A\\
               &= \hat\b_{ridge} + (y-\X\hat\b_{ridge})^\top \otimes
                 \left[(\X^\top  \X + \lambda 
                 I)^{-1} \X^\top\right]\vecfun{A-I} + R_A\\
               &= \hat\b_{ridge} + \vecfun{ (\X^\top\X + \lambda
                 I)^{-1}\X^\top(A-I)(y-\X\hat\b_{ridge})} + R_A \label{eq:vecID}\\
  &= \hat\b_{ridge} + (\X^\top\X + \lambda
                 I)^{-1}\X^\top(A-I) (y-\X\hat\b_{ridge}) + R_A,
\end{align}
where the third equality follows from the result
$(B^\top\otimes A)\vecfun{X} = \vecfun{AXB}$.

The result for the partially compressed estimator proceeds similarly.
\begin{align}
  \hat\b_{PC}(A) &= \hat\b_{PC}(I) + \frac{\partial
                   \hat\b_{PC}(A)}{\partial \vecfun{A}}\Bigg|_{A=I} \vecfun{A-I} + R_A\\
                 &= \hat\b_{ridge} + \frac{\partial
                   (\X^\top A\X + \lambda I)^{-1}X^\top y}{\partial \vecfun{A}}\Bigg|_{A=I} \vecfun{A-I} + R_A.
\end{align}
Then, as above,
\begin{align}
  \frac{\partial (\X^\top A\X + \lambda I)^{-1}X^\top  y}{\partial
  \vecfun{A}} &= \frac{\partial \vecfun{
                   (\X^\top A\X + \lambda I)^{-1}X^\top y}}{\partial
             \vecfun{A}}\\
  &= \left[y^\top  \X \otimes I\right] \frac{\partial (\X^\top
    A\X + \lambda I)^{-1}} {\partial \vecfun{A}}\\
  &= -\left[ y^\top \X (\X^\top A \X + \lambda I)^{-1}
               \X^\top \right] \otimes \left[(\X^\top A \X + \lambda
               I)^{-1} \X^\top\right].
\end{align}
Finally,
\begin{align}
\hat\b_{PC}(A) &= \hat\b_{ridge} -\left[ y^\top \X (\X^\top A \X + \lambda I)^{-1}
               \X^\top \right] \otimes \left[(\X^\top A \X + \lambda
               I)^{-1} \X^\top\right]\Bigg|_{A=I}\vecfun{A-I} + R_A\\
               &= \hat\b_{ridge} -\left[ y^\top \X (\X^\top \X + \lambda I)^{-1}
                 \X^\top \right] \otimes \left[(\X^\top \X + \lambda
                 I)^{-1} \X^\top\right]\vecfun{A-I} + R_A\\
               &= \hat\b_{ridge} - \vecfun{(\X^\top \X + \lambda I)^{-1}
               \X^\top(A-I)\X(\X^\top \X + \lambda I)^{-1}\X^\top y} + R_A \\
  &= \hat\b_{ridge} - (\X^\top \X + \lambda I)^{-1}
               \X^\top(A-I)\X(\X^\top \X + \lambda I)^{-1}\X^\top y +
    R_A\\
  &= \hat\b_{ridge} - (\X^\top \X + \lambda I)^{-1}
               \X^\top(A-I)\X\hat\b_{ridge} + R_A.
\end{align}
\end{proof}

\begin{proof}[Proof of \autoref{thm:coefFC}]
  As $\E_Q[A] = I$, the conditional expectation is
  straightforward. For the conditional variance,
  \begin{align}
    \V_Q[\hat\b_{FC}\given Y] 
    &=(Y-\X\hat\b_{ridge})^\top \otimes
      \left[(\X^\top  \X + \lambda 
      I)^{-1} \X^\top\right] \V\left[ \vecfun{\frac{s}{q}
      Q^\top Q}\right] \times\\
    &\quad\times (Y-\X\hat\b_{ridge})\otimes \left[\X(\X^\top  \X + \lambda 
      I)^{-1} \right] + \V_Q[R_A]\\
    &= \left( (Y-\X\hat\b_{ridge})^\top \otimes
      \left[(\X^\top  \X + \lambda 
      I)^{-1} \X^\top\right]\right) \times \\
    &\quad\times \left(\frac{(s-3)_+}{q} \diag(\vecfun{I_n}) +
      \frac{1}{q} I_{n^2}+\frac{1}{q} K_{nn}\right) \times\\
    &\quad\times (Y-\X\hat\b_{ridge})\otimes \left[\X(\X^\top  \X + \lambda 
      I)^{-1} \right]+ \V_Q[R_A]\\
    &= \frac{(s-3)_+}{q} (\hat{e}^\top \otimes M )
      \diag(\vecfun{I_n}) (\hat{e} \otimes M^\top) +\\
    &\quad + \frac{1}{q}(\hat{e}^\top \otimes M ) (\hat{e} \otimes M^\top)
      + \frac{1}{q} (\hat{e}^\top \otimes M ) K_{nn} (\hat{e} \otimes M^\top) +\V_Q[R_A]\\
    &= \frac{(s-3)_+}{q} M\hat{e}\hat{e}^\top M^\top +
      \frac{1}{q}(\hat{e}^\top \hat{e})\otimes (M M^\top) 
      + \frac{1}{q} (M\otimes\hat{e}^\top)(\hat{e}\otimes M^\top) +\V_Q[R_A]\\
    &= \frac{(s-3)_+}{q} M\hat{e}\hat{e}^\top M^\top 
    + \frac{1}{q}\hat{e}^\top\hat{e} M M^\top + \frac{1}{q}
      M\hat{e}\hat{e}^\top M^\top +\V_Q[R_A].
  \end{align}
  For
  the unconditional case, taking the expectation of the conditional
  expectation with respect to $Y$ gives the first result. The second
  follows from the formula for total variance and
  \autoref{lem:ridge-residual-expect}. 
\end{proof}

\begin{proof}[Proof of \autoref{thm:coefPC}]
  Borrowing the notation from the previous result,
  \begin{align}
    \V_Q[\hat\b_{PC}\given Y] 
    &=(\hat Y^\top \otimes M)\V\left[ \vecfun{\frac{s}{q}
      Q^\top Q}\right] (\hat Y \otimes M^\top)+ \V_Q[R_A]\\
    &= \frac{(s-3)_+}{q}\left( \hat Y^\top \otimes
      M\right) \diag(\vecfun{I_n}) (\hat{Y} \otimes M^\top) +\\
    &\quad + \frac{1}{q}\left( \hat Y^\top \otimes
      M\right) (\hat{Y}\otimes M^\top) + \frac{1}{q}\left( \hat Y^\top \otimes
      M\right) K_{nn}(\hat{Y} \otimes M^\top) + \V_Q[R_A]\\
    &= \frac{(s-3)_+}{q}M\hat{Y}\hat{Y}^\top M^\top 
      + \frac{1}{q}
      \hat{Y}^\top\hat{Y} MM^\top + \frac{1}{q}M\hat{Y} \hat{Y}^\top M^\top  +\V_Q[R_A].
  \end{align}
  As before, taking the expectation of the conditional
  expectation with respect to $Y$ gives the first result. The second
  follows from the formula for total variance and
  \autoref{lem:ridge-hat-expect}. 
\end{proof}

The following two lemmas follow from standard
properties of expectation and variance of linear estimators, and their
proofs are omitted.

\begin{lemma}
  The expectation and variance of the fully compressed regression
  estimator, conditional on $Q$ and the design, are
  \begin{align*}
    \E[\hat\b_{FC}(\lambda)\given\X, Q]
    &= (\X^\top Q^\top Q\X + \lambda
                                  I)^{-1}\X^\top Q^\top Q \X\bstar,\\
\intertext{and}
    \V[\hat\b_{FC}(\lambda)\given\X, Q] 
    &=  \sigma^2(\X^\top Q^\top Q\X + \lambda
      I)^{-1}\X^\top (Q^\top Q)^2 \X (\X^\top Q^\top Q\X + \lambda
      I)^{-1}.
  \end{align*}

\end{lemma}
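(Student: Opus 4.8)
The plan is to exploit the fact that the fully compressed ridge estimator in equation \eqref{eq:wFCridgeEst} is \emph{linear} in the response: conditional on $\X$ and $Q$, we may write $\hat\b_{FC}(\lambda) = \Phi Y$, where
\[
\Phi := (\X^\top Q^\top Q\X + \lambda I)^{-1}\X^\top Q^\top Q
\]
is a fixed (non-random) matrix once both $\X$ and $Q$ are held fixed. All of the randomness remaining after conditioning is therefore carried entirely by $Y$ through the linear model \eqref{eq:linearModel}, for which $\E[Y\given\X] = \X\bstar$ and $\V[Y\given\X] = \sigma^2 I_n$.

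First I would compute the conditional mean. By linearity of expectation, $\E[\hat\b_{FC}(\lambda)\given\X,Q] = \Phi\,\E[Y\given\X] = \Phi\X\bstar$, which upon substituting the definition of $\Phi$ yields $(\X^\top Q^\top Q\X + \lambda I)^{-1}\X^\top Q^\top Q\X\bstar$, the claimed expression. Next, for the variance I would use $\V[\Phi Y\given\X,Q] = \Phi\,\V[Y\given\X]\,\Phi^\top = \sigma^2\Phi\Phi^\top$, so the entire task reduces to simplifying $\Phi\Phi^\top$.

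The only computation requiring care is the transpose bookkeeping in $\Phi\Phi^\top$. Here I would use two symmetry facts: $Q^\top Q$ is symmetric, and $\X^\top Q^\top Q\X + \lambda I = (Q\X)^\top(Q\X) + \lambda I$ is symmetric and hence so is its inverse. Consequently
\[
\Phi^\top = Q^\top Q\,\X\,(\X^\top Q^\top Q\X + \lambda I)^{-1},
\]
and multiplying gives a central factor $Q^\top Q\,Q^\top Q = (Q^\top Q)^2$ sandwiched between $\X^\top(\cdot)\X$ and the two symmetric inverse factors, exactly matching the stated form. I do not anticipate any genuine obstacle: the result follows from the standard mean and variance rules for an estimator linear in $Y$, which is precisely why the paper flags the proof as omittable. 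The one place to stay careful is ensuring that $(Q^\top Q)^2$, rather than an unsimplified product, appears in the middle, which the symmetry of $Q^\top Q$ guarantees.
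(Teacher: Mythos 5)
Your proposal is correct and is precisely the standard linear-estimator argument the paper invokes when it omits this proof: writing $\hat\b_{FC}(\lambda)=\Phi Y$ with $\Phi=(\X^\top Q^\top Q\X+\lambda I)^{-1}\X^\top Q^\top Q$ fixed given $\X$ and $Q$, then applying $\E[\Phi Y]=\Phi\X\bstar$ and $\V[\Phi Y]=\sigma^2\Phi\Phi^\top$ with the symmetry of $Q^\top Q$ and of the inverted matrix. Nothing is missing.
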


\begin{lemma}
  The expectation and variance of the partially compressed regression
  estimator, conditional on $Q$ and the design, are
  \begin{align*}
    \E[\hat\b_{PC}(\lambda)\given\X, Q]
    &= (\X^\top Q^\top Q\X + \lambda
                                  I)^{-1}\X^\top \X\bstar,\\
\intertext{and}
    \V[\hat\b_{PC}(\lambda)\given\X, Q] 
    &=  \sigma^2(\X^\top Q^\top Q\X + \lambda
      I)^{-1}\X^\top \X (\X^\top Q^\top Q\X + \lambda
      I)^{-1}.
  \end{align*}

\end{lemma}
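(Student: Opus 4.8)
The plan is to exploit the fact that, once $Q$ and $\X$ are held fixed, the partially compressed ridge estimator from equation \eqref{eq:wPCridgeEst} is a purely linear function of the response. Writing
\[
G := (\X^\top Q^\top Q\X + \lambda I)^{-1}\X^\top,
\]
we have $\hat\b_{PC}(\lambda) = G\Y$, where $G$ depends only on $\X$, $Q$, and $\lambda$, and not on $\Y$. The entire argument then reduces to the standard expectation and variance rules for a linear statistic, applied under the model \eqref{eq:linearModel} in which $\E[\Y\given\X] = \X\bstar$ and $\V[\Y\given\X] = \sigma^2 I_n$.

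First I would compute the mean. By linearity of expectation,
\[
\E[\hat\b_{PC}(\lambda)\given\X,Q] = G\,\E[\Y\given\X] = G\X\bstar = (\X^\top Q^\top Q\X + \lambda I)^{-1}\X^\top\X\bstar,
\]
which is the first claimed identity. For the variance I would use $\V[G\Y\given\X,Q] = G\,\V[\Y\given\X]\,G^\top = \sigma^2 GG^\top$. The only bookkeeping step is to transpose $G$: since $\X^\top Q^\top Q\X + \lambda I$ is symmetric, its inverse is symmetric as well, so $G^\top = \X(\X^\top Q^\top Q\X + \lambda I)^{-1}$, and therefore
\[
\sigma^2 GG^\top = \sigma^2 (\X^\top Q^\top Q\X + \lambda I)^{-1}\X^\top\X(\X^\top Q^\top Q\X + \lambda I)^{-1},
\]
matching the second claimed identity.

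There is essentially no hard step here, which is exactly why the authors relegate the proof to a single sentence. The only points deserving care are (i) recording the symmetry of $\X^\top Q^\top Q\X + \lambda I$ so that the sandwiched variance emerges in the stated symmetric form, and (ii) noting that this matrix is invertible for every $\lambda>0$ regardless of the rank of $\X$ or of $Q\X$ (as observed after equation \eqref{eq:wPCridgeEst}), so that $G$, and hence $\hat\b_{PC}(\lambda)$, is well defined without any full-rank assumption. The companion statement for full compression is identical in structure, with $G$ replaced by $(\X^\top Q^\top Q\X + \lambda I)^{-1}\X^\top Q^\top Q$; the extra factor $Q^\top Q$ then propagates into the mean (producing the $\X^\top Q^\top Q\X\bstar$ term) and into the variance (producing the $\X^\top (Q^\top Q)^2\X$ sandwich) in precisely the same manner.
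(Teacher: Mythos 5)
Your argument is correct and is exactly the one the paper intends: the authors state that these two lemmas ``follow from standard properties of expectation and variance of linear estimators'' and omit the proof, and your computation $\E[G\Y]=G\X\bstar$, $\V[G\Y]=\sigma^2GG^\top$ with $G=(\X^\top Q^\top Q\X+\lambda I)^{-1}\X^\top$ is precisely that standard argument, with the symmetry and invertibility observations correctly noted.
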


We now use these lemmas to derive the following theorems for full and
partial compression.

\begin{proof}[Proof of \autoref{thm:bias2taylor}]
The squared bias for each estimator is
\begin{align}
  \bias^2(A) = \norm{\E[\hat\b(A)] - \bstar}_2^2.
\end{align}
Note that this is a function of $A$, and the expectation is over $y$
(and therefore $\epsilon$). We suppress the conditioning of all
expectations on $\X$ and $Q$. Taking a Taylor expansion to first order,
\begin{align}
  \bias^2(A) 
  &= \norm{\E[\hat\b(I)] - \bstar}_2^2 +
    \frac{\partial}{\partial \vecfun{A}} \norm{\E[\hat\b(A)] -
    \bstar}_2^2 \bigg|_{A=I} \vecfun{A-I} + R_A\\
  &= \norm{\E[\hat\b(I)] - \bstar}_2^2 + 2(\E[\hat\b(A)] -
    \bstar)^\top \frac{\partial}{\partial \vecfun{A}} \E[\hat\b(A)]
    \bigg|_{A=I} \vecfun{A-I} + R_A.
\end{align}
We have that
\begin{align}
  \E[\hat\b_{FC}(A)] &= (\X^\top A \X + \lambda I)^{-1} \X^\top A
                       \X\bstar\\
  \E[\hat\b_{PC}(A)] &= (\X^\top A \X + \lambda I)^{-1} \X^\top \X\bstar.
\end{align}
For the
fully pre-conditioned estimator, 
\begin{align}
  \frac{\partial (\X^\top A\X + \lambda I)^{-1}\X^\top A \X\bstar}{\partial
  \vecfun{A}} &= \frac{\partial \vecfun{
                   (\X^\top A\X + \lambda I)^{-1}\X^\top A \X\bstar}}{\partial
             \vecfun{A}}\\
  &= \left[ 1 \otimes (\X^\top A\X + \lambda
    I)^{-1}\right] \frac{\partial \vecfun{\X^\top A\X \bstar}}{\partial
    \vecfun{A}} + \\
  &\quad +\left[\bstar^\top\X^\top A \X \otimes I_p\right] \frac{\partial \vecfun{(\X^\top
    A\X + \lambda I)^{-1}}} {\partial \vecfun{A}}.
\end{align}
Proceeding as in the proof of \autoref{lem:coef-taylor-exp},
\begin{align}
\left[1\otimes  (\X^\top A\X + \lambda
    I)^{-1} \right] \frac{\partial \vecfun{\X^\top A \X\bstar}}{\partial
    \vecfun{A}} &= \left[ 1 \otimes (\X^\top A\X + \lambda
    I)^{-1} \right] (\bstar^\top\X^\top \otimes \X^\top)\\
  &= \bstar^\top\X^\top \otimes (\X^\top A\X + \lambda I)^{-1} \X^\top,
\end{align}
and
\begin{align}
  \frac{\partial \vecfun{(\X^\top A\X + \lambda I)^{-1}}} {\partial \vecfun{A}} 
             &=   \frac{\partial \vecfun{(\X^\top
    A\X + \lambda I)^{-1}}} { \partial \vecfun{\X^\top
    A\X + \lambda I}} \frac{\partial\vecfun{  \X^\top
    A\X + \lambda I}} {\partial \vecfun{A}}\\
             &= -\left[ (\X^\top A\X + \lambda I)^{-1} \otimes (\X^\top
    A\X + \lambda I)^{-1} \right] \left[\X^\top \otimes \X^\top
               \right]\\
  &= -\left[(\X^\top A\X + \lambda I)^{-1}\X^\top\right] \otimes \left[(\X^\top
    A\X + \lambda I)^{-1}\X^\top\right] \\
\intertext{so}
  \lefteqn{\left[\bstar^\top\X^\top A \X \otimes I_p\right]
  \frac{\partial\vecfun{ (\X^\top A\X + \lambda I)^{-1}}} {\partial \vecfun{A}}}\\
             &= -\left[ \bstar^\top\X^\top A\X (\X^\top A \X + \lambda
               I)^{-1} \X^\top \right] \otimes \left[(\X^\top A \X +
               \lambda I)^{-1} \X^\top\right].
\end{align}
Therefore,
\begin{align}
  \lefteqn{\frac{\partial \vecfun{(\X^\top A\X + \lambda I)^{-1} \X^\top A\X\bstar}}
  {\partial \vecfun{A}}}\\
  &= \left[\bstar^\top\X^\top - \bstar^\top\X^\top A\X (\X^\top A \X + \lambda I)^{-1}
               \X^\top \right] \otimes \left[(\X^\top A \X + \lambda
               I)^{-1} \X^\top\right].
\end{align}
Evaluating  
\begin{align*}
&2\left( (\X^\top A \X + \lambda I)^{-1} \X^\top A
                       \X\bstar-\bstar\right)\\ 
&\times \left[\bstar^\top\X^\top - \bstar^\top\X^\top A\X (\X^\top A \X + \lambda I)^{-1}
               \X^\top \right] \otimes \left[(\X^\top A \X + \lambda
               I)^{-1} \X^\top\right]
\end{align*}
at $A=I_n$ and simplifying gives
  \begin{align}
    &2(\E[\hat\b_{FC}(A)] -
    \bstar)^\top \frac{\partial}{\partial \vecfun{A}} \E[\hat\b_{FC}(A)]
    \bigg|_{A=I} \vecfun{A-I}\\
  &=2  \left(M
    \X\bstar - \bstar\right)^\top  \left[\bstar^\top\X^\top -
    \bstar^\top\X^\top H \right] \otimes M \vecfun{A-I_n}\\
    &= 2  (\left(M
    \X - I_p\right)\bstar)^\top  \left[\bstar^\top\X^\top (I_n -
      H) \right] \otimes M \vecfun{A-I_n}\label{eq:fcCondExpect}.
  \end{align}
We can simplify this expression using the following result from \citep[][16.2.15]{Harville1997}:
\[
\vecfun{A}^\top (D \otimes B) \vecfun{C} = \tr(A^\top B C D^\top).
\]
Therefore,
\begin{align}
  \lefteqn{2  (\left(M
  \X - I_p\right)\bstar)^\top  \left[\bstar^\top\X^\top (I_n -
  H) \right] \otimes M \vecfun{A-I_n}}\\
&= 2\tr\left(\bstar^\top\left[M\X
  - I_p\right] M (A-I_n) \left[I_n -
  H\right]\X\bstar\right)\\ 
&= 2\bstar^\top(M\X-I_p) M (A-I_n) (I_n - 
  H )\X\bstar.
\end{align}
Taking the expectation with respect to $Q$ gives the second result.

For the case of partial pre-conditioning,
\begin{align}
  \frac{\partial (\X^\top A\X + \lambda I)^{-1}\X^\top\X\bstar}{\partial
  \vecfun{A}} 
  &= \frac{\partial \vecfun{
    (\X^\top A\X + \lambda I)^{-1}\X^\top A \X\bstar}}{\partial
    \vecfun{A}}\\
  &= 
    \left[\bstar^\top\X^\top  \X \otimes I_p\right] \frac{\partial \vecfun{(\X^\top
    A\X + \lambda I)^{-1}}} {\partial \vecfun{A}}\\
  &= -\left[ \bstar^\top\X^\top \X (\X^\top A \X + \lambda
    I)^{-1} \X^\top \right] \otimes \left[(\X^\top A \X +
    \lambda I)^{-1} \X^\top\right]\\
  &= -\left[ \bstar^\top\X^\top \X (\X^\top A \X + \lambda
    I)^{-1} \X^\top \right] \otimes \left[(\X^\top A \X +
    \lambda I)^{-1} \X^\top\right]
\end{align}
Proceeding as above and evaluating at $A=I_n$,
\begin{align}
  &2(\E[\hat\b_{PC}(A)] -
    \bstar)^\top \frac{\partial}{\partial \vecfun{A}} \E[\hat\b_{PC}(A)]
    \bigg|_{A=I} \vecfun{A-I}\\
  &=-2  (\left(M
    \X - I_p\right)\bstar)^\top  \left[\bstar^\top\X^\top H
    \right] \otimes M \vecfun{A-I_n}\label{eq:pcCondExpect}\\
&= -2\tr\left(\bstar^\top\left[M\X
  - I_p\right] M (A-I_n)
  H\X\bstar\right)\\ 
&= 2\bstar^\top\left[I_p-M\X \right] M (A-I_n)
  H\X\bstar.
\end{align}

\end{proof}

\begin{proof}[Proof of \autoref{thm:trace-variance-taylor}]
  As for the trace of the variance, the Taylor expansion is given by
\[
\tr(\V[\hat\b\given \X, Q]) = \tr(\V[\hat\b_{ridge}\given \X]) +
\frac{\partial}{\partial \vecfun{A}} \tr(\V[\hat\b\given
\X, Q])\Bigg|_{A=I}\vecfun{A-I_n} + R_A.
\]

For any matrix $B$, we have
$\tr(B^\top B) = \vecfun{B}^\top\vecfun{B}$. Therefore,
\begin{align}
  \tr(\V[\hat\b_{FC}(A) \given \X, Q]) &= \tr((\X^\top A \X + \lambda
  I)^{-1}\X^\top A^\top A \X (\X^\top A \X + \lambda
  I)^{-1})\\
  \tr(\V[\hat\b_{PC}(A) \given\X, Q]) &= \tr((\X^\top A \X + \lambda
  I)^{-1}\X^\top \X (\X^\top A \X + \lambda
  I)^{-1}),
\end{align}
which both have this form. We begin with fully compressed.
\begin{align}
  &\frac{\partial}{\partial\vecfun{A}} \tr((\X^\top A \X + \lambda
  I)^{-1}\X^\top A^\top A \X (\X^\top A \X + \lambda
  I)^{-1})\\
  &= 2\vecfun{A\X(\X^\top A \X + \lambda
  I)^{-1}}^\top \frac{\partial}{\partial \vecfun{A}} \vecfun{A\X(\X^\top A \X + \lambda
  I)^{-1}}.
\end{align}
Now, we have
\begin{align}
  &\frac{\partial}{\partial \vecfun{A}} \vecfun{A\X(\X^\top A \X + \lambda
  I)^{-1}}  \\
  &= [I_p \otimes A\X] \frac{\partial}{\partial \vecfun{A}} \vecfun{(\X^\top A \X + \lambda
  I)^{-1}}+\\
  &\quad+ [(\X^\top A \X + \lambda I)^{-1} \otimes I_n]
    \frac{\partial}{\partial \vecfun{A}} \vecfun{A\X},
\end{align}
where
\begin{align}
  \frac{\partial}{\partial \vecfun{A}} \vecfun{(\X^\top A \X + \lambda
  I)^{-1}} 
  &=  -\left[(\X^\top A\X + \lambda I)^{-1}\X^\top\right] \otimes \left[(\X^\top
    A\X + \lambda I)^{-1}\X^\top\right]\\
\intertext{and}
  \frac{\partial}{\partial \vecfun{A}} \vecfun{A\X}
  &= \X^\top \otimes I_n.
\end{align}
Combining these gives
\begin{align}
  &\frac{\partial}{\partial \vecfun{A}} \vecfun{A\X(\X^\top A \X + \lambda
    I)^{-1}}  \\
  &=- [I_p \otimes A\X] \left[(\X^\top A\X + \lambda I)^{-1}\X^\top\right] \otimes \left[(\X^\top
    A\X + \lambda I)^{-1}\X^\top\right] +\\
  &\quad + [(\X^\top A \X + \lambda I)^{-1} \otimes I_n][\X^\top
    \otimes I_n]\\
  &= -(\X^\top A\X + \lambda I)^{-1}\X^\top \otimes A\X (\X^\top A\X +
    \lambda I)^{-1}\X^\top +\\
  &\quad + (\X^\top A\X + \lambda I)^{-1}\X^\top \otimes I_n\\
  &= -(\X^\top A\X + \lambda I)^{-1}\X^\top \otimes [A\X (\X^\top A\X +
    \lambda I)^{-1}\X^\top - I_n].
\end{align}
Evaluating at $A=I_n$ and simplifying gives
  \begin{align}
    &\frac{\partial}{\partial \vecfun{A}} \tr(\V[\hat\b_{FC}\given
    \X,Q])\Bigg|_{A=I}\vecfun{A-I_n}\\
    &=-2\vecfun{M^\top}^\top \left[M \otimes (H- I_n)\right] \vecfun{A - I_n}\\
    &= 2 \tr\left(M(I_n-H)(A-I_n)M^\top\right).
  \end{align}

For the case of partial compression, we only need 
\begin{align}
  &\frac{\partial}{\partial \vecfun{A}} \vecfun{\X(\X^\top A \X + \lambda
    I)^{-1}} \\
  &= [I_p \otimes \X] \frac{\partial}{\partial \vecfun{A}} \vecfun{(\X^\top A \X + \lambda
  I)^{-1}}\\
  &= - [I_p \otimes \X] \left[(\X^\top A\X + \lambda I)^{-1}\X^\top\right] \otimes \left[(\X^\top
    A\X + \lambda I)^{-1}\X^\top\right]\\
  &= - [(\X^\top A\X + \lambda I)^{-1}\X^\top] \otimes [\X (\X^\top A\X + \lambda I)^{-1}\X^\top]
\end{align}
Evaluating at $A=I_n$ and simplifying gives
  \begin{align}
    \frac{\partial}{\partial \vecfun{A}} \tr(\V[\hat\b_{FC}\given
    \X,Q])\Bigg|_{A=I}\vecfun{A-I_n}
    &=-2\vecfun{M^\top}^\top M \otimes H \vecfun{A - I_n}\\
    &= -2 \tr\left(MH(A-I_n)M^\top\right).
  \end{align}

For the variances unconditional on $Q$, note that by the law of total
variance,
\begin{align*}
  \tr(\V[\hat\b \given \X]) &= \tr\left(\E[\V[\hat\b\given \X, Q]]\right)
+\tr\left(\V[\E[\hat\b\given \X, Q]]\right)\\
  &= \E[\tr\left(\V[\hat\b\given \X, Q]\right)]
+\tr\left(\V[\E[\hat\b\given \X, Q]]\right)\\
  &= \sigma^2 \sum_{j=1}^p \frac{d_j^2} {(d_j^2 + \lambda)^2} +\E[R_A]
    + \tr\left(\V[\E[\hat\b\given \X, Q]]\right)
\end{align*}
By the conditional part of this theorem. Now, we first expand
$\E[\hat\b\given \X, Q]$, relying on previous results:
\begin{align*}
  \E[\hat\b\given \X, Q] 
  &= (\X^\top \X + \lambda I)^{-1}\X^\top\X\b_*
    + \frac{\partial}{\partial A}
    \E[\hat\b\given \X, Q] \Bigg|_{A=I}\vecfun{A-I_n} + R_A.
\end{align*}
As we will be taking the variance with respect to $A$, the first term
is constant. For the second term we have,
\begin{align*}
  \E[\hat\b_{FC}\given \X, Q] \Bigg|_{A=I}\vecfun{A-I_n} 
  &= [\b^\top_*\X^\top(I_n-H)]\otimes M \vecfun{A-I_n}  & \textrm{by
                                                          \eqref{eq:fcCondExpect}}\\
  &=\vecfun{M(A-I_n)(I_n-H)\X\b_*}\\
  &=M(A-I_n)(I_n-H)\X\b_*\\
  \E[\hat\b_{PC}\given \X, Q] \Bigg|_{A=I}\vecfun{A-I_n} 
  &=  \left[\bstar^\top\X^\top H
    \right] \otimes M \vecfun{A-I_n}&\textrm{by
                                      \eqref{eq:pcCondExpect}}\\
  &= \vecfun{M(A-I_n)H\X\b_*}\\
  &= M(A-I_n)H\X\b_*.
\end{align*}
Therefore,
\begin{align*}
   \tr(\V[\hat\b_{FC} \given \X]) &=\sigma^2 \sum_{i=1}^p \frac{d_i^2}
                                    {(d_i^2 + \lambda)^2} +\E[R_A] +
                                    \tr(\V[MA(I_n-H)\X\b_*]) +
                                    \tr(\V[R_A])\\
   \tr(\V[\hat\b_{PC} \given \X]) &=\sigma^2 \sum_{i=1}^p \frac{d_i^2}
                                    {(d_i^2 + \lambda)^2} +\E[R_A] +
                                    \tr(\V[MAH\X\b_*]) + \tr(\V[R_A]).
\end{align*}
In both cases, we need $\tr(\V[LAz])$ for a matrix $L$ and vector
$z$. As this is a vector, we have that $LAz = (z^\top\otimes
L)\vecfun{A}$, so 
\begin{align*}
  \V[LAz] 
  &=(z^\top\otimes
    L)\V[\vecfun{A}] (z\otimes
    L^\top)\\
  &=(z^\top\otimes
L)\left[\frac{(s-3)_+}{q} \diag(\vecfun{I_n}) +
  \frac{1}{q} I_{n^2}+\frac{1}{q} K_{nn}\right] (z \otimes
    L^\top)\\
  &=\frac{(s-3)_+}{q} (z^\top\otimes
    L) \diag(\vecfun{I_n}) (z\otimes
    L^\top)+\\
  &\quad+\frac{1}{q}(z^\top\otimes
    L) (z \otimes L^\top) +
    \frac{1}{q}(z^\top\otimes
    L)K_{nn} (z\otimes L^\top) \\
  &=\frac{(s-3)_+}{q} Lzz^\top L^\top+\frac{1}{q}(z^\top z\otimes LL^\top) +
    \frac{1}{q}(L\otimes z^\top)(z \otimes L^\top) \\
  &=\frac{(s-3)_+}{q} Lzz^\top L^\top+\frac{z^\top z}{q}LL^\top +
    \frac{1}{q}Lzz^\top L^\top\\
  &=\frac{(s-2)_+}{q} Lzz^\top L^\top+\frac{z^\top z}{q}LL^\top.
\end{align*}
Now applying the trace,
\begin{align*}
  \tr(\V[LAz] ) 
  &= \frac{(s-2)_+}{q} \tr(Lzz^\top L^\top)+\frac{z^\top z}{q}\tr(LL^\top).
\end{align*}
Therefore,
\begin{align*}
  \tr(\V[MA(I_n-H)\X\b_*]) &=
    \frac{(s-2)_+}{q}\tr\left(M(I_n-H)\X\bstar\bstar^\top\X^\top(I_n-H)M^\top\right)\\
  &\quad + \frac{1}{q}\bstar^\top\X^\top(I-H)^2\X\bstar\tr(M^\top M)\\ 
  \tr(\V[MAH\X\b_*]) &=
    \frac{(s-2)_+}{q}\tr\left(MH\X\bstar\bstar^\top\X^\top HM^\top\right)\\
  & \quad+ \frac{1}{q}\bstar^\top\X^\top H^2\X\bstar\tr(M^\top M).
\end{align*}
Using the SVD of $\X$ gives,
\begin{align*}
  M(I_n-H)\X 
  &= V(D^2+\lambda I_p)^{-1}DU^\top(I_n - UD(D^2+\lambda
    I_p)^{-1}DU^\top) UDV^\top\\
  &= V(D^2+\lambda I_p)^{-1} D(I_p - D(D^2+\lambda
    I_p)^{-1}D)DV^\top\\
  &= \lambda V(D^2+\lambda I_p)^{-1}D^2(D^2+\lambda I_p)^{-1}V^\top\\
  &= \lambda VD^2(D^2+\lambda I_p)^{-2}V^\top\\
  \X^\top (I_n-H)^2\X
  &= VDU^\top\left( \lambda^2 U(D^2+\lambda I_p)^{-2}U^\top\right) UDV^\top\\
  &= \lambda^2VD^2(D^2+\lambda I_p)^{-2}V^\top\\
  MH\X 
  &= V(D^2+\lambda I_p)^{-1}DU^\top U D(D^2+\lambda I_p)^{-1}DU^\top U D V^\top\\
  &= V(D^2+\lambda I_p)^{-1}D^2(D^2+\lambda I_p)^{-1}D^2 V^\top\\
  &= VD^4(D^2+\lambda I_p)^{-2}V^\top\\
  \X^\top H^2\X
  &= \X^\top \X(\X^\top X + \lambda I_p)^{-1}\X^\top\X(\X^\top X +
    \lambda I_p)^{-1}\X^\top\X\\
  &= VD^6(D^2+\lambda I_p)^{-2}V^\top.
\end{align*}
Therefore,
\begin{align*}
  \tr(\V[MA(I_n-H)\X\b_*]) &=
    \frac{\lambda^2(s-2)_+}{q}\bstar^\top V D^2(D^2+\lambda
                             I)^{-4}D^2V^\top \bstar\\
  &\quad + \frac{\lambda^2}{q}\bstar^\top VD(D^2+\lambda I)^{-2}DV^\top\bstar\sum_{j=1}^p \frac{d_j^2} {(d_j^2 + \lambda)^2}\\ 
  \tr(\V[MAH\X\b_*]) &=
    \frac{(s-2)_+}{q}\bstar^\top VD^2(D^2+\lambda I)^{-2}D^2 V^\top\bstar\\
  & \quad+ \frac{1}{q}\bstar^\top VD^3(D^2+\lambda I)^{-2}D^3 V^\top\bstar\sum_{j=1}^p \frac{d_j^2} {(d_j^2 + \lambda)^2}.
\end{align*}

\end{proof}

\begin{proof}[Proof of \autoref{cor:lambda-star}]
  For full compression we have,
  \begin{align*}
    \tr\left(\frac{\lambda^2(s-2)_+}{q}\bstar^\top V D^2(D^2+\lambda
                             I)^{-4}D^2V^\top \bstar\right) 
    &= \frac{\lambda^2(s-2)_+}{q}\frac{b^2p n^2}{(n+\lambda)^4} \\
    \tr\left(\frac{\lambda^2}{q}\bstar^\top VD(D^2+\lambda
    I)^{-2}DV^\top\bstar\right)\sum_{j=1}^p \frac{d_j^2} {(d_j^2 + \lambda)^2} 
    &= \frac{\lambda^2}{q}\frac{b^2pn}{(n+\lambda)^2}\frac{pn}{(n+\lambda)^2}.
  \end{align*}

  For partial compression we have,
  \begin{align*}
    \tr\left(\frac{(s-2)_+}{q}\bstar^\top VD^2(D^2+\lambda I)^{-2}D^2 V^\top\bstar\right) 
    &= \frac{(s-2)_+}{q}\frac{b^2 p n^2}{(n+\lambda)^2} \\
    \tr\left(\frac{1}{q}\bstar^\top VD^3(D^2+\lambda I)^{-2}D^3 V^\top\bstar\right)\sum_{j=1}^p \frac{d_j^2} {(d_j^2 + \lambda)^2} 
    &= \frac{\lambda^2}{q}\frac{b^2pn^3}{(n+\lambda)^2}\frac{pn}{(n+\lambda)^2}.
  \end{align*}
  Simplifying and substituting $\theta=\lambda/n$ gives the result.
\end{proof}
\end{document}